\newcommand{\cact}[1]{\overline{#1}}
\newcommand{\D}[0]{\mathsf{D}}
\newcommand{\dd}[0]{{\boldsymbol \partial}}
\newcommand{\pair}[2]{\left\langle {#1}, {#2} \right\rangle}
\newcommand{\four}[4]{\pair{\pair{#1}{#2}}{\pair{#3}{#4}}}
\newcommand{\CdCax}[1]{{[\bf C$\dd$.{#1}]}}
\newcommand{\red}[1]{{\color{red} {#1}}}
\newcommand{\blue}[1]{{\color{blue} {#1}}}
\newcommand{\js}[1]{ \blue{ JS: {#1}} }
\begin{document}
\title{
    Cartesian Difference Categories
}
\subtitle{Extended Report \footnote{This is an extended and corrected version of the FOSSACS conference paper \cite{alvarez2020cartesian}.}}

\author{Mario Alvarez-Picallo\inst{1}\and
Jean-Simon Pacaud Lemay\inst{2}}
\authorrunning{M. Alvarez-Picallo \& J-S. P. Lemay}
%
\institute{Department of Computer Science, University of Oxford, UK \\
\email{mario.alvarez-picallo@cs.ox.ac.uk} \and
Department of Computer Science, University of Oxford, UK \\
\email{jean-simon.lemay@kellogg.ox.ac.uk}}
\maketitle              
\begin{abstract}
Cartesian differential categories are categories equipped with a differential combinator which axiomatizes the directional derivative. Important models of Cartesian differential categories include classical differential calculus of smooth functions and categorical models of the differential $\lambda$-calculus. However, Cartesian differential categories cannot account for other interesting notions of differentiation such as the calculus of finite differences or the Boolean differential calculus. On the other hand, change action models have been shown to capture these examples as well as more ``exotic'' examples of differentiation. However, change action models are very general and do not share the nice properties of a Cartesian differential category. In this paper, we introduce Cartesian difference categories as a bridge between Cartesian differential categories and change action models. We show that every Cartesian differential category is a Cartesian difference category, and how certain well-behaved change action models are Cartesian difference categories. In particular, Cartesian difference categories model both the differential calculus of smooth functions and the calculus of finite differences. Furthermore, every Cartesian difference category comes equipped with a tangent bundle monad whose Kleisli category is again a Cartesian difference category. 

\keywords{Cartesian Difference Categories \and Cartesian Differential Categories \and Change Actions \and Calculus Of Finite Differences \and Stream Calculus.}
\end{abstract}
\section{Introduction}

In the early 2000's, Ehrhard and Regnier introduced the differential $\lambda$-calculus \cite{ehrhard2003differential}, an extension of the $\lambda$-calculus equipped with a differential combinator capable of taking the derivative of arbitrary higher-order functions. This development, based on models of linear logic equipped with a natural notion of ``derivative'' \cite{ehrhard2018introduction}, sparked a wave of research into categorical models of differentiation. 

One of the most notable developments in the area are Cartesian differential categories \cite{blute2009cartesian}, introduced by Blute, Cockett and Seely, which provide an abstract categorical axiomatization of the directional derivative from differential calculus. The relevance of Cartesian differential categories lies in their ability to model both ``classical'' differential calculus (with the canonical example being the category of Euclidean spaces and smooth functions between) and the differential $\lambda$-calculus (as every categorical model for it gives rise to a Cartesian differential category \cite{manzonetto2012categorical}). However, while Cartesian differential categories have proven to be an immensely successful formalism, they have, by design, some limitations. Firstly, they cannot account for certain ``exotic'' notions of derivative, such as the difference operator from the calculus of finite differences \cite{richardson1954introduction} or the Boolean differential calculus \cite{steinbach2009boolean}. This is because the axioms of a Cartesian differential category stipulate that derivatives should be linear in their second argument (in the same way that the directional derivative is), whereas these aforementioned discrete sorts of derivative need not be. Additionally, every Cartesian differential category is equipped with a tangent bundle monad \cite{cockett2014differential,manzyuk2012tangent} whose Kleisli category can be intuitively understood as a category of generalized vector fields. 

More recently, discrete derivatives have been suggested as a semantic framework for understanding incremental computation. This led to the development of change structures \cite{cai2014theory} and change actions \cite{alvarez2019change}. Change action models have been successfully used to provide a model for incrementalizing Datalog programs \cite{alvarez2019fixing}, and have also been shown to model the calculus of finite differences.
Change action models, however, are very general, lacking many of the nice properties of Cartesian differential categories (for example, addition in a change action model is not required to be commutative), even though they are verified in most change action models. As a consequence of this generality, the tangent bundle endofunctor in a change action model can fail to be a monad. 

In this work we introduce Cartesian difference categories (Section \ref{CdCsec}), whose key ingredients are an infinitesimal extension operator and a difference combinator, whose axioms are a generalization of the differential combinator axioms of a Cartesian differential category. In Section \ref{CDCisCdCsec}, we show that every Cartesian differential category is in fact a Cartesian difference category whose infinitesimal extension operator is zero, and conversely how every Cartesian difference category admits a full subcategory which is a Cartesian differential category. In Section \ref{CdCisCAsec}, we show that every Cartesian difference category is a change action model, and conversely how a full subcategory of suitably well-behaved objects of a change action model is a Cartesian difference category. In Section \ref{monadsec}, we show that every Cartesian difference category comes equipped with a monad whose Kleisli category again a Cartesian difference category. Finally, in Section \ref{EXsec} we provide some examples of Cartesian difference categories; notably, the calculus of finite differences and the stream calculus.  

\section{Cartesian Differential Categories}\label{CDCsec}

In this section we briefly review Cartesian differential categories, so that the reader may compare Cartesian differential categories with the new notion of Cartesian \emph{difference} categories which we introduce in the next section. For a full detailed introduction on Cartesian differential categories, we refer the reader to the original paper \cite{blute2009cartesian}. 

\subsection{Cartesian Left Additive Categories}

Here we provide the definition of Cartesian left additive categories \cite{blute2009cartesian} -- where ``additive'' is meant being skew enriched over commutative monoids. In particular this means that we do not assume the existence of additive inverses, i.e., ``negative elements''. 

\begin{definition}\label{LACdef} A \textbf{left additive category} \cite{blute2009cartesian} is a category $\mathbb{X}$ such that each hom-set $\mathbb{X}(A,B)$ is a commutative monoid with addition operation $+: \mathbb{X}(A,B) \times \mathbb{X}(A,B) \to \mathbb{X}(A,B)$ and zero element (called the zero map) $0 \in \mathbb{X}(A,B)$, such that pre-composition preserves the additive structure: $(f + g) \circ h = f \circ h + g \circ h$ and $0 \circ f = 0$. A map $k$ in a left additive category is \textbf{additive} if post-composition by $k$ preserves the additive structure: $k \circ (f +g) = k \circ f + k \circ g$ and $k \circ 0=0$. 
\end{definition} 

By a Cartesian category we mean a category $\mathbb{X}$ with chosen finite products where we denote the binary product of objects $A$ and $B$ by $A \times B$ with projection maps $\pi_0: A \times B \to A$ and $\pi_1: A \times B \to B$ and pairing operation $\langle -, - \rangle$, and the chosen terminal object as $\top$ with unique terminal maps $!_A: A \to \top$. 

\begin{definition}\label{CLACdef}
A \textbf{Cartesian left additive category} \cite{blute2009cartesian} is a Cartesian category $\mathbb{X}$ which is also a left additive category such all projection maps $\pi_0: A \times B \to A$ and $\pi_1: A \times B \to B$ are additive \footnote{Note that the definition given here of a Cartesian left additive category is slightly different from the one found in \cite{blute2009cartesian}, but it is indeed equivalent.}. 
\end{definition}

By \cite[Proposition 1.2.2]{blute2009cartesian}, an equivalent axiomatization is of a Cartesian left additive category is that of a Cartesian category where every object comes equipped with a commutative monoid structure such that the projection maps are monoid morphisms. This will be important later in Section \ref{CdCsec}. 

\subsection{Cartesian Differential Categories}

\begin{definition}\label{cartdiffdef} A \textbf{Cartesian differential category} \cite{blute2009cartesian} is a Cartesian left additive category equipped with a \textbf{differential combinator} $\mathsf{D}$ of the form\footnote{Note that here, following the more recent work on Cartesian differential categories, we've flipped the convention found in \cite{blute2009cartesian}, so that the linear argument is in the second argument rather than in the first argument. }: 
\[ \frac{f : A \to B}{\mathsf{D}[f]: A \times A \to B} \]
 verifying the following coherence conditions: 
\begin{enumerate}[{\bf [CD.1]}]
\item $\mathsf{D}[f+g] = \mathsf{D}[f] + \mathsf{D}[g]$ and $\mathsf{D}[0]=0$
\item $\mathsf{D}[f] \circ \langle x, y+z \rangle= \mathsf{D}[f] \circ \langle x, y \rangle + \mathsf{D}[f] \circ \langle x, z \rangle$ and $\mathsf{D}[f] \circ \langle x, 0 \rangle=0$
\item $\mathsf{D}[1_A]=\pi_1$ and $\mathsf{D}[\pi_0] = \pi_0 \circ \pi_1$ and $\mathsf{D}[\pi_1] = \pi_1 \circ \pi_1$ 
\item $\mathsf{D}[\langle f, g \rangle] = \langle  \mathsf{D}[f], \mathsf{D}[g] \rangle$ and $\mathsf{D}[!_A]=!_{A \times A}$
\item $\mathsf{D}[g \circ f] = \mathsf{D}[g] \circ \langle f \circ \pi_0, \mathsf{D}[f] \rangle$
\item $\mathsf{D}\left[\mathsf{D}[f] \right] \circ \left \langle \langle x, y \rangle, \langle 0, z \rangle \right \rangle=  \mathsf{D}[f] \circ \langle x, z \rangle$
\item $\mathsf{D}\left[\mathsf{D}[f] \right] \circ \left \langle \langle x, y \rangle, \langle z, 0 \rangle \right \rangle = \mathsf{D}\left[\mathsf{D}[f] \right] \circ \left \langle \langle x, z \rangle, \langle y, 0 \rangle \right \rangle$
\end{enumerate}
\end{definition}

We highlight that by \cite[Proposition 4.2]{cockett2014differential}, the last two axioms {\bf [CD.6]} and {\bf [CD.7]} have an equivalent alternative expression. 
\begin{lemma}In the presence of the other axioms, {\bf [CD.6]} and {\bf [CD.7]} are equivalent to:
 \begin{enumerate}[{\bf [CD.1.{a}]}]
 \setcounter{enumi}{5}
\item $\mathsf{D}\left[\mathsf{D}[f] \right] \circ \left \langle \langle x, 0 \rangle, \langle 0, y \rangle \right \rangle= \mathsf{D}[f] \circ  \langle x, y \rangle$
\item $\mathsf{D}\left[\mathsf{D}[f] \right] \circ \left \langle \langle x, y \rangle, \langle z, w \rangle \right \rangle= \mathsf{D}\left[\mathsf{D}[f] \right] \circ \left \langle \langle x, z \rangle, \langle y, w \rangle \right \rangle$
\end{enumerate}
\end{lemma}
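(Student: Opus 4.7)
The plan is to prove each implication separately; both directions depend essentially on the additivity of the derivative in its second argument (axiom \textbf{[CD.2]}), applied to the double derivative $\mathsf{D}[\mathsf{D}[f]]$, whose ``second argument'' is itself a pair and can therefore be split coordinate-wise.

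For the forward direction, observe first that \textbf{[CD.6.a]} is literally \textbf{[CD.6]} with $y=0$, so nothing is required. For \textbf{[CD.7.a]}, I would decompose $\langle z, w \rangle = \langle z, 0 \rangle + \langle 0, w \rangle$ inside $\mathsf{D}[\mathsf{D}[f]]$ and use \textbf{[CD.2]} to distribute the double derivative over this sum. The first summand matches the left-hand side of \textbf{[CD.7]} and is therefore equal to $\mathsf{D}[\mathsf{D}[f]] \circ \langle \langle x, z \rangle, \langle y, 0 \rangle \rangle$; the second summand, by \textbf{[CD.6]}, collapses to $\mathsf{D}[f] \circ \langle x, w \rangle$, and applying \textbf{[CD.6]} a second time (now with $\langle x, z \rangle$ in place of $\langle x, y \rangle$) re-expresses this as $\mathsf{D}[\mathsf{D}[f]] \circ \langle \langle x, z \rangle, \langle 0, w \rangle \rangle$. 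Reassembling the two pieces via \textbf{[CD.2]} in reverse then gives the right-hand side of \textbf{[CD.7.a]}.

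For the reverse direction, \textbf{[CD.7]} is the specialization of \textbf{[CD.7.a]} with $w=0$. The interesting case is \textbf{[CD.6]}: starting from $\mathsf{D}[\mathsf{D}[f]] \circ \langle \langle x, y \rangle, \langle 0, z \rangle \rangle$, I would swap the middle entries using \textbf{[CD.7.a]} to obtain $\mathsf{D}[\mathsf{D}[f]] \circ \langle \langle x, 0 \rangle, \langle y, z \rangle \rangle$, then split $\langle y, z \rangle = \langle y, 0 \rangle + \langle 0, z \rangle$ and apply \textbf{[CD.2]}. The second summand is \textbf{[CD.6.a]} verbatim and equals $\mathsf{D}[f] \circ \langle x, z \rangle$, so the only task left is to show that the residual term $\mathsf{D}[\mathsf{D}[f]] \circ \langle \langle x, 0 \rangle, \langle y, 0 \rangle \rangle$ vanishes.

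This last vanishing step is the only subtle point in the argument. I would dispatch it by invoking \textbf{[CD.7.a]} once more, swapping the inner $0$ and $y$ to rewrite the residual as $\mathsf{D}[\mathsf{D}[f]] \circ \langle \langle x, y \rangle, \langle 0, 0 \rangle \rangle$. Since the projections are additive in a Cartesian left additive category, the pair $\langle 0, 0 \rangle$ coincides with the zero map into $A \times A$, and \textbf{[CD.2]} then forces the whole expression to be $0$, completing the derivation of \textbf{[CD.6]}. The only mildly delicate ingredient throughout is keeping careful track of which of the four ``slots'' of $\mathsf{D}[\mathsf{D}[f]]$ is being manipulated at each step; apart from that, the proof is a purely equational manipulation built out of \textbf{[CD.2]}, \textbf{[CD.6.a]/[CD.6]}, and \textbf{[CD.7.a]/[CD.7]}.
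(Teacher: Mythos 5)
Your proof is correct, and it follows essentially the same route as the argument the paper defers to (Proposition 4.2 of Cockett--Cruttwell, cited in place of an explicit proof): additivity of $\mathsf{D}[\mathsf{D}[f]]$ in its second (pair-valued) argument via \textbf{[CD.2]} to split $\langle z,w\rangle$ coordinate-wise, the swap axiom to rearrange slots, and the vanishing of $\mathsf{D}[\mathsf{D}[f]] \circ \langle \langle x,y\rangle, 0\rangle$ to kill the residual term. All the individual steps check out, including the identification $\langle 0,0\rangle = 0$, which indeed relies on the projections being additive in a Cartesian left additive category.
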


As a Cartesian difference category is a generalization of a Cartesian differential category, we leave the discussion of the intuition of these axioms for later in Section \ref{CdCsec} below. We also refer to \cite[Section 4]{blute2009cartesian} for a term calculus which may help better understand the axioms of a Cartesian differential category. The canonical example of a Cartesian differential category is the category of real smooth functions, which we will discuss in Section \ref{smoothex}. Other interesting examples of can be found throughout the literature such as categorical models of the differential $\lambda$-calculus \cite{ehrhard2003differential,manzonetto2012categorical}, the subcategory of differential objects of a tangent category \cite{cockett2014differential}, and the coKleisli category of a differential category \cite{blute2006differential,blute2009cartesian}.

\section{Change Action Models}\label{CAsec}

Change actions \cite{alvarez2019fixing,alvarez2019change} have recently been proposed as a setting for reasoning about higher-order incremental computation, based on a discrete notion of differentiation. Together with Cartesian differential categories, they provide the core ideas behind Cartesian difference categories. In this section we quickly review change actions and change action models, in particular to highlight where the some of the axioms of a Cartesian difference category come from. For more details on change actions, we invite readers to see the original paper \cite{alvarez2019change}. 

\subsection{Change Actions}
\begin{definition}
    A \textbf{change action} $\cact{A}$ in a Cartesian category $\mathbb{X}$ is a quintuple $\cact{A} \equiv (A, \Delta A, \oplus_A, +_A, 0_A)$ consisting of two objects $A$ and $\Delta A$, and three maps:
\[ \oplus_A: A \times \Delta A \to A \quad \quad \quad +_A: \Delta A \times \Delta A \to \Delta A \quad \quad \quad 0_A: \top \to \Delta A \]
such that $(\Delta A, +_A, 0_A)$ is a monoid and $\oplus_A : A \times \Delta A \to A$ is an action of $\Delta A$ on $A$, that is, the following equalities hold:
    \[ \oplus_A \circ \langle 1_A, 0_A \circ !_A \rangle = 1_A \quad \quad \quad \oplus_A \circ (1_A \times +_A) = \oplus_A \circ (\oplus_A \times 1_{\Delta A}) \]
\end{definition}

For a change action $\cact{A}$ and given a pair of maps $f: C \to A$ and $g: C \to \Delta A$, we define $f \oplus_{\cact{A}} g: C \to A$ as $f \oplus_{\cact{A}} g = \oplus_A \circ \langle f, g \rangle$. Similarly, for maps $h: C \to \Delta A$ and $k: C \to \Delta A$, define $h +_{\cact{A}} k = +_A \circ \langle h, k \rangle$. Therefore, that $\oplus_A$ is an action of $\Delta A$ on $A$ can be rewritten as: 
\[ 1_A \oplus_{\cact{A}} 0_A = 1_A \quad \quad \quad 1_A \oplus_{\cact{A}} (1_{\Delta A} +_{\cact{A}} 1_{\Delta A}) = (1_A \oplus_{\cact{A}} 1_{\Delta A}) \oplus_{\cact{A}} 1_{\Delta A} \]

The intuition behind the above definition is that the monoid $\Delta A$ is a type of possible ``changes'' or ``updates'' that might be applied to $A$, with the monoid structure on $\Delta A$ representing the capability to compose updates.

\begin{example}
    Any monoid $(A, +_A, 0_A)$, where $+_A: A \times A \to A$ and $0_A: \top \to A$, acts on itself on the right and therefore defines a change action $(A, A, +_A, +_A, 0_A)$.
\end{example}
\begin{example}
  In a Cartesian closed category, with internal hom $- \Rightarrow -$, evaluation map $\mathbf{ev}$, and currying operator $\Lambda(-)$, the quintuple $(A, A \Rightarrow A, \mathbf{ev}, \circ, \Lambda(1_A))$ is a change action.
\end{example}

Change actions give rise to a notion of derivative, with a distinctly ``discrete'' flavour. Given change actions on objects $A$ and  $B$, a map $f : A \to B$ can be said to be differentiable when changes to the input (in the sense of elements of $\Delta A$) are mapped to changes to the output (that is, elements of $\Delta B$). In the setting of incremental computation, this is precisely what it means for $f$ to be incrementalizable, with the derivative of $f$ corresponding to an incremental version of $f$.

\begin{definition}\label{def:derivative}
    Let $\cact{A} \equiv (A, \Delta A, \oplus_A, +_A, 0_A)$ and $\cact{B} \equiv
    (B, \Delta B, \oplus_B, +_B, 0_B)$ be change actions. For a map $f: A \to
    B$, a map $\dd[f] : A \times \Delta A \to \Delta B$ is a \textbf{derivative}
    of $f$ whenever the following axioms hold: 
    \begin{enumerate}[{\bf [C{A}D.1]}]
        \item $f \circ (x \oplus_{\cact{A}} y) = f \circ x \oplus_{\cact{B}} (\dd[f] \circ \pair{x}{y})$
        \item $\dd[f] \circ \pair{x}{y +_{\cact{A}} z} = (\dd[f] \circ \pair{x}{y}) +_{\cact{B}} (\dd[f] \circ \pair{x \oplus_{\cact{A}} y}{z})$ and \\ \noindent$\dd[f] \circ \langle x, 0_B \circ !_B \rangle = 0_B \circ !_{A \times \Delta A}$
    \end{enumerate}
\end{definition}

The intuition for these axioms will be explain in more details in Section
\ref{CdCsec} when we detail the axioms of a Cartesian difference category. Note
that although there is nothing in the above definition guaranteeing that any
given map have at most a single derivative, the chain rule does hold. As a
corollary, differentiation is compositional and therefore the change actions in
$\mathbb{X}$ form a category. 

\begin{lemma} Whenever $\dd[f]$ and $\dd[g]$ are derivatives for composable maps
$f$ and $g$ respectively, then $\dd[g] \circ \langle f \circ \pi_0, \dd[f]
\rangle$ is a derivative for $g \circ f$.
\end{lemma}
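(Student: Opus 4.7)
The plan is to verify the two axioms \CdCax{1} and \CdCax{2} directly for the candidate $\dd[g\circ f] \defeq \dd[g] \circ \langle f \circ \pi_0, \dd[f] \rangle$, applying in each case the corresponding axiom first to $f$ and then to $g$. Throughout I write $\cact{A}$, $\cact{B}$, $\cact{C}$ for the change actions carried by the domain of $f$, the codomain of $f$ (domain of $g$), and the codomain of $g$ respectively.

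For the first axiom, I start with $(g \circ f) \circ (x \oplus_{\cact{A}} y)$. Applying \CdCax{1} for $f$ rewrites the inner composite as $(f \circ x) \oplus_{\cact{B}} (\dd[f] \circ \langle x, y \rangle)$, and then applying \CdCax{1} for $g$, with base point $f \circ x$ and change $\dd[f] \circ \langle x, y \rangle$, yields $(g \circ f \circ x) \oplus_{\cact{C}} \bigl(\dd[g] \circ \langle f \circ x, \dd[f] \circ \langle x, y \rangle \rangle \bigr)$, which is precisely $(g \circ f) \circ x \oplus_{\cact{C}} (\dd[g\circ f] \circ \langle x, y \rangle)$.

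For the second axiom, expand $\dd[g \circ f] \circ \langle x, y +_{\cact{A}} z \rangle = \dd[g] \circ \langle f \circ x, \dd[f] \circ \langle x, y +_{\cact{A}} z \rangle \rangle$. Apply \CdCax{2} for $f$ to split the inner term into a sum, then apply \CdCax{2} for $g$ (with base point $f \circ x$) to split the outer derivative over that sum. This produces two summands; in the second summand the base point appears as $(f \circ x) \oplus_{\cact{B}} (\dd[f] \circ \langle x, y \rangle)$, and a single use of \CdCax{1} for $f$ rewrites this as $f \circ (x \oplus_{\cact{A}} y)$. The result matches $(\dd[g\circ f] \circ \langle x, y \rangle) +_{\cact{C}} (\dd[g\circ f] \circ \langle x \oplus_{\cact{A}} y, z \rangle)$ on the nose. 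The zero clause of \CdCax{2} follows immediately by applying the zero clause for $f$ and then the zero clause for $g$.

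There is no substantive obstacle, since the argument is purely equational. The one step that requires some attention is recognizing, inside the second summand produced by \CdCax{2} for $g$, that the base point can be simplified via \CdCax{1} for $f$ so as to match the required form $f \circ \pi_0$ evaluated at $x \oplus_{\cact{A}} y$; this is the key place where the two axioms interact, and it is exactly what forces the chain rule to have the stated ``semi-direct'' shape $\dd[g] \circ \langle f \circ \pi_0, \dd[f] \rangle$ rather than a simpler expression.
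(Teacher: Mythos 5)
Your proof is correct and is exactly the standard argument the paper relies on (the paper states this lemma without proof, deferring to the original change actions paper): verify {\bf [CAD.1]} by applying it first for $f$ and then for $g$, and verify {\bf [CAD.2]} by splitting the inner derivative of $f$, splitting the outer derivative of $g$, and then using {\bf [CAD.1]} for $f$ to rewrite the base point $(f \circ x) \oplus_{\cact{B}} (\dd[f] \circ \pair{x}{y})$ as $f \circ (x \oplus_{\cact{A}} y)$ --- which is indeed the one non-mechanical step. The only blemish is notational: you cite the axioms as {\bf [C$\dd$.1]} and {\bf [C$\dd$.2]} (the Cartesian difference combinator axioms), whereas the ones actually being used here are the change-action derivative axioms {\bf [CAD.1]} and {\bf [CAD.2]}; the content of your argument makes clear this is only a mislabeling.
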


\subsection{Change Action Models}

\begin{definition}
    Given a Cartesian category $\mathbb{X}$, define its change actions category
    $\mathsf{CAct}(\mathbb{X})$ as the category whose objects are change actions
    in $\mathbb{X}$ and whose arrows $\cact{f} : \cact{A} \to \cact{B}$ are the
    pairs $(f, \dd[f])$, where $f : A \to B$ is an arrow in $\mathbb{X}$ and
    $\dd[f] : A \times \Delta A \to \Delta B$ is a derivative for $f$. The
    identity is $(1_A, \pi_1)$, while composition of $(f, \dd[f])$ and $(g,
    \dd[g])$ is $(g \circ f, \dd[g] \circ \langle f \circ \pi_0, \dd[f]
    \rangle)$. 
\end{definition}

There is an obvious product-preserving forgetful functor $\mathcal{E} :
\mathsf{CAct}(\mathbb{X}) \to \mathbb{X}$ sending every change action $(A,
\Delta A, \oplus, +, 0)$ to its base object $A$ and every map $(f, \dd[f])$ to
the underlying map $f$. As a setting for studying differentiation, the category
$\mathsf{CAct}(\mathbb{X})$ is rather lacklustre, since there is no notion of
higher derivatives, so we will instead work with change action models.
Informally, a change action model consists of a rule that associates, to every
object $A$ of $\mathbb{X}$, a change action over it, and to every map a choice
of derivative for it.

\begin{definition}
    A \textbf{change action model} is a Cartesian category $\mathbb{X}$ is a
    product-preserving functor $\alpha : \mathbb{X} \to
    \mathsf{CAct}(\mathbb{X})$ that is a section of the forgetful functor
    $\mathcal{E}$.
\end{definition}

For brevity, when $A$ is an object of a change action model, we will write
$\Delta A$, $\oplus_A$, $+_A$, and $0_A$ to refer to the components of the
corresponding change action $\alpha(A)$. Examples of change action models can be
found in \cite{alvarez2019change}. In particular, we highlight that a Cartesian
differential category always provides a change model action. We will generalize
this result, and show in Section \ref{CdCisCAsec} that a Cartesian difference
category also always provides a change action model. 

\section{Cartesian Difference Categories}
\newcommand{\RR}[0]{\mathbb{R}}

In this section we introduce \emph{Cartesian difference categories}, which are
generalizations of Cartesian differential categories equipped with an operator
that turns a map into an ``infinitesimal'' version of it. To give a clearer
understanding of this idea, consider the Taylor approximation of a
differentiable function $f : \RR \to \RR$ which states that, for any $x$ and
sufficiently small $y$:
\[
    f(x + y) \approx f(x) + y \cdot f'(x)
\]
The intuition behind the ``infinitesimal extension'' that we seek to define is
that it captures what is meant for $y$ to be ``sufficiently small''. 

\subsection{Infinitesimal Extensions in Left Additive Categories}

 \begin{definition} A Cartesian left additive category $\mathbb{X}$ is said to have an \textbf{infinitesimal extension} $\varepsilon$ if every homset $\mathbb{X}(A,B)$ comes equipped with a monoid morphism $\varepsilon: \mathbb{X}(A,B) \to \mathbb{X}(A,B)$, that is, $\varepsilon(f+g) = \varepsilon(f) + \varepsilon(g)$ and $\varepsilon(0) = 0$, and such that $\varepsilon(g \circ f) = \varepsilon(g) \circ f$ and $\varepsilon(\pi_0) = \pi_0 \circ \varepsilon(1_{A\times B})$ and $\varepsilon(\pi_1) = \pi_1 \circ \varepsilon(1_{A\times B})$. 
\end{definition}  

Note that since $\varepsilon(g \circ f) = \varepsilon(g) \circ f$, it follows
that $\varepsilon(f) = \varepsilon(1_B) \circ f$ and $\varepsilon(1_A)$ is an
additive map (Definition \ref{LACdef}). Furthermore, infinitesimal extensions
give rise to a canonical change action structure on each object:

\begin{lemma}\label{caelem} Let $\mathbb{X}$ be a Cartesian left additive category with infinitesimal extension $\varepsilon$. For every object $A$, define the maps $\oplus_A: A \times A \to A$, $+_A: A \times A \to A$, and $0_A: \top \to A$ respectively as follows: 
\begin{align*}
\oplus_A = \pi_0 + \varepsilon(\pi_1) && +_A = \pi_0 + \pi_1 && 0_A = 0 
\end{align*}
Then $(A, A, \oplus_A, +_A, 0_A)$ is a change action in $\mathbb{X}$. 
\end{lemma}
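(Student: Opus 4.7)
The plan is to verify, in turn, that $(A, +_A, 0_A)$ is a monoid and that $\oplus_A$ satisfies the two action axioms of Definition of change action. The monoid part is essentially free: by the alternative characterization of Cartesian left additive categories mentioned just after Definition \ref{CLACdef}, every object $A$ already carries a commutative monoid structure whose addition is $+_A = \pi_0 + \pi_1$ and whose unit is $0_A = 0$. So the only real content is checking the two action laws.

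For the unit law $\oplus_A \circ \langle 1_A, 0_A \circ {!_A} \rangle = 1_A$, I would expand using the fact that pre-composition distributes over sums in a left additive category:
\[
(\pi_0 + \varepsilon(\pi_1)) \circ \langle 1_A, 0 \rangle \;=\; \pi_0 \circ \langle 1_A, 0 \rangle + \varepsilon(\pi_1) \circ \langle 1_A, 0 \rangle.
\]
The first summand is $1_A$. For the second, I would apply $\varepsilon(g \circ f) = \varepsilon(g) \circ f$ in reverse to get $\varepsilon(\pi_1) \circ \langle 1_A, 0 \rangle = \varepsilon(\pi_1 \circ \langle 1_A, 0 \rangle) = \varepsilon(0) = 0$, using that $\varepsilon$ preserves the zero map. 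Adding these gives $1_A$.

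For the associativity law $\oplus_A \circ (1_A \times {+_A}) = \oplus_A \circ ({\oplus_A} \times 1_A)$, I would compute each side by writing out the pairing in terms of the three projections $p_0, p_1, p_2 : A \times A \times A \to A$. For the left-hand side, $1_A \times {+_A} = \langle p_0, p_1 + p_2 \rangle$, so pre-composing with $\oplus_A = \pi_0 + \varepsilon(\pi_1)$ and using distributivity of pre-composition over $+$ yields $p_0 + \varepsilon(p_1 + p_2)$, which by the monoid morphism property of $\varepsilon$ equals $p_0 + \varepsilon(p_1) + \varepsilon(p_2)$. For the right-hand side, ${\oplus_A} \times 1_A = \langle p_0 + \varepsilon(p_1), p_2 \rangle$, and pre-composing with $\oplus_A$ gives $p_0 + \varepsilon(p_1) + \varepsilon(p_2)$ in the same way. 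The two sides agree.

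No step is particularly hard; the mild obstacle is just bookkeeping: keeping the projections from $A \times (A \times A)$ and $(A \times A) \times A$ straight, and invoking the three ingredients (distributivity of pre-composition over $+$, the identity $\varepsilon(g \circ f) = \varepsilon(g) \circ f$, and the monoid-morphism property of $\varepsilon$) at the right places. Commutativity of addition on hom-sets is what ultimately makes the two grouped sums $p_0 + \varepsilon(p_1) + \varepsilon(p_2)$ match cleanly.
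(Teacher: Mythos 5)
Your proposal is correct and follows the same route as the paper, which merely sketches this argument in one line (the monoid part is quoted from the equivalent characterization of Cartesian left additive categories, and the action law is attributed to $\varepsilon$ preserving addition); your version simply fills in the routine calculations. One tiny quibble: what makes the two grouped sums $p_0 + \varepsilon(p_1) + \varepsilon(p_2)$ agree is associativity of the hom-set monoid rather than commutativity, since both sides arrive at the summands in the same order.
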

\begin{proof} As mentioned earlier, that $(A, +_A, 0_A)$ is a commutative monoid was shown in \cite{blute2009cartesian}. On the other hand, that $\oplus_A$ is an action follows from the fact that $\varepsilon$ preserves addition.  \hfill $\blacksquare$
\end{proof} 

Setting $\cact{A} \equiv (A, A, \oplus_A, +_A, 0_A)$, we note that $f
\oplus_{\cact{A}} g = f + \varepsilon(g)$ and $f +_{\cact{A}} g = f + g$, and so
in particular $+_{\cact{A}} = +$. Therefore, from now on we will omit the
subscripts and simply write $\oplus$ and $+$. 

For every Cartesian left additive category, there are always at least two
possible infinitesimal extensions: 

\begin{lemma} For any Cartesian left additive category $\mathbb{X}$, 
 \begin{enumerate}
 \item Setting $\varepsilon(f) = 0$ defines an infinitesimal extension on $\mathbb{X}$ and therefore in this case, $\oplus_A = \pi_0$ and $f \oplus g = f$.
\item Setting $\varepsilon(f) = f$ defines an infinitesimal extension on $\mathbb{X}$ and therefore in this case, $\oplus_A = +_A$ and $f \oplus g = f + g$.
\end{enumerate}
\end{lemma}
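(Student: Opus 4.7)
The plan is to verify directly, in each of the two cases, each clause of the definition of an infinitesimal extension, and then to compute the induced $\oplus_A$ and $f \oplus g$. Since the definition has only a handful of equational requirements, and since both candidates are defined pointwise on hom-sets in a particularly trivial way, every verification reduces to at most one of the structural properties of a Cartesian left additive category (the bilinearity of $+$ in either variable, the fact that $0$ absorbs under pre-composition, and the additivity of projections).

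For the first case $\varepsilon(f) = 0$, I would first note that the monoid-morphism clauses $\varepsilon(f+g) = \varepsilon(f)+\varepsilon(g)$ and $\varepsilon(0)=0$ are immediate from $0 + 0 = 0$. The composition clause $\varepsilon(g \circ f) = \varepsilon(g) \circ f$ is just $0 = 0 \circ f$, which is part of the left additive structure (Definition \ref{LACdef}). For the two projection clauses, observe that $\varepsilon(1_{A\times B}) = 0$, so $\pi_i \circ \varepsilon(1_{A\times B}) = \pi_i \circ 0 = 0$ (using additivity of $\pi_i$, or simply pre-composition by $0$ on either side, depending on which direction one reads), matching $\varepsilon(\pi_i) = 0$. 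Finally, $\oplus_A = \pi_0 + \varepsilon(\pi_1) = \pi_0 + 0 = \pi_0$, so $f \oplus g = \pi_0 \circ \langle f, g \rangle = f$.

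For the second case $\varepsilon(f) = f$, all clauses are essentially tautological: $\varepsilon(f+g) = f+g = \varepsilon(f) + \varepsilon(g)$, $\varepsilon(0)=0$, $\varepsilon(g \circ f) = g \circ f = \varepsilon(g)\circ f$, and $\varepsilon(\pi_i) = \pi_i = \pi_i \circ 1_{A \times B} = \pi_i \circ \varepsilon(1_{A\times B})$. Then $\oplus_A = \pi_0 + \varepsilon(\pi_1) = \pi_0 + \pi_1 = +_A$, and hence $f \oplus g = (\pi_0 + \pi_1) \circ \langle f, g \rangle = \pi_0 \circ \langle f, g \rangle + \pi_1 \circ \langle f, g \rangle = f + g$ by right-distributivity of composition over $+$.

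There is no real obstacle here: the whole proof is a checklist, and the only place where one must pause at all is to remember that pre-composition (but not post-composition) distributes over addition in a left additive category, which is exactly what one needs for the projection clauses in the first case and for the final computation $\pi_0 \circ \langle f,g\rangle + \pi_1 \circ \langle f,g\rangle = f + g$ in the second case.
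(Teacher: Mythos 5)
Your proof is correct and is exactly the routine clause-by-clause verification that the paper leaves as an exercise ("this is straightforward and we leave it as an exercise for the reader"). You correctly isolate the only two non-tautological points: that $\pi_i \circ 0 = 0$ requires the additivity of the projections (part of the Cartesian left additive structure, not the bare left additive axioms), and that the final computation $(\pi_0 + \pi_1) \circ \langle f,g\rangle = f + g$ uses precisely the pre-composition distributivity $(f+g)\circ h = f\circ h + g\circ h$.
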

\begin{proof} This is straightforward and we leave it as an exercise for the reader. \hfill $\blacksquare$
\end{proof} 

We note that while these examples of infinitesimal extensions may seem trivial,
they are both very important as they will give rise to key examples of Cartesian
difference categories. 

\subsection{Cartesian Difference Categories}\label{CdCsec}

\begin{definition} A \textbf{Cartesian difference category} is a Cartesian left
additive category with an infinitesimal extension $\varepsilon$ which is
equipped with a \textbf{difference combinator} $\dd$ of the form:
\[ \frac{f : A \to B}{\dd[f]: A \times A \to B} \]
 verifying the following coherence conditions: 
 \begin{enumerate}[{\bf [C$\dd$.1]}]
 \setcounter{enumi}{-1}
 \item $f \circ (x + \varepsilon(y)) = f \circ x + \varepsilon\left( \dd[f] \circ \langle x, y \rangle \right)$ 
\item $\dd[f+g] = \dd[f] + \dd[g]$, $\dd[0] = 0$ and $\dd[\varepsilon(f)] = \varepsilon(\dd[f])$ 
 \item $\dd[f] \circ \langle x, y + z \rangle = \dd[f] \circ \langle x, y \rangle + \dd[f] \circ \langle x + \varepsilon(y), z \rangle$ and $\dd[f] \circ \langle x, 0 \rangle = 0$
 \item $\dd[1_A] = \pi_1$ and $\dd[\pi_0] = \pi_1; \pi_0$ and $\dd[\pi_1] = \pi_1; \pi_0$
  \item $\dd[\langle f, g \rangle] = \langle \dd[f], \dd[g] \rangle$ and $\dd[!_A] = !_{A \times A}$ 
 \item $\dd[g \circ f] = \dd[g] \circ \langle f \circ \pi_0, \dd[f] \rangle$ 
\item $\dd\left[\dd[f] \right] \circ \left \langle \langle  x, y \rangle, \langle 0, z \rangle \right \rangle= \dd[f] \circ  \langle x + \varepsilon(y), z \rangle$
\item  $ \dd\left[\dd[f] \right] \circ \left \langle \langle x, y \rangle, \langle z, 0 \rangle \right \rangle=  \dd\left[\dd[f] \right] \circ \left \langle \langle x, z \rangle, \langle y, 0 \rangle \right \rangle$
\end{enumerate}
\end{definition}

Before giving some intuition on the axioms {\bf [C$\dd$.0]} to {\bf [C$\dd$.7]}, we first observe that one could have used change action notation to express {\bf [C$\dd$.0]}, {\bf [C$\dd$.2]}, and {\bf [C$\dd$.6]} which would then be written as:
 \begin{description}
 \item[{\bf [C$\dd$.0]}] $f \circ (x \oplus y) = (f \circ x) \oplus \left(\dd[f] \circ \langle x, y \rangle \right)$ 
 \item[{\bf [C$\dd$.2]}]  $\dd[f] \circ \langle x, y + z \rangle = \dd[f] \circ \langle x, y \rangle + \dd[f] \circ \langle x \oplus y, z \rangle$ and $\dd[f] \circ \langle x, 0 \rangle = 0$
 \item[{\bf [C$\dd$.6]}]  $\dd\left[\dd[f] \right] \circ \left \langle \langle  x, y \rangle, \langle 0, z \rangle \right \rangle= \dd[f] \circ  \langle x \oplus y, z \rangle$
\end{description}
And also, just like Cartesian differential categories, {\bf [C$\dd$.6]} and {\bf [C$\dd$.7]} have alternative equivalent expressions. 
\begin{lemma} In the presence of the other axioms, {\bf [C$\dd$.6]} and {\bf [C$\dd$.7]} are equivalent to:
 \begin{enumerate}[{\bf [C$\dd$.1.{a}]}]
 \setcounter{enumi}{5}
\item $\dd\left[\dd[f] \right] \circ \left \langle \langle  x, 0 \rangle, \langle 0, y \rangle \right \rangle = \dd[f] \circ  \langle x, y \rangle$
\item $\dd\left[\dd[f] \right] \circ \left \langle \langle x, y \rangle, \langle z, w \rangle \right \rangle= \dd\left[\dd[f] \right] \circ \left \langle \langle x, z \rangle, \langle y, w \rangle \right \rangle$
\end{enumerate}
\end{lemma}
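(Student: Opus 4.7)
The plan is to prove the equivalence by establishing each implication separately, mirroring the strategy used for the analogous Cartesian differential category result in \cite{cockett2014differential} but accounting carefully for the presence of $\varepsilon$. First I observe a useful preliminary: because $\varepsilon(\pi_0) = \pi_0 \circ \varepsilon(1_{A\times B})$ and $\varepsilon(g\circ f) = \varepsilon(g)\circ f$, one can show that $\varepsilon$ commutes with pairings in the sense that $\varepsilon\langle g, h \rangle = \langle \varepsilon(g), \varepsilon(h) \rangle$; in particular $\varepsilon\langle y, 0\rangle = \langle \varepsilon(y), 0\rangle$ and $\langle x,y\rangle + \varepsilon\langle z,0\rangle = \langle x+\varepsilon(z), y\rangle$. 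This identity, together with commutativity of $+$, will be used repeatedly below.

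For the forward direction, I derive \textbf{[C$\dd$.6.a]} from \textbf{[C$\dd$.6]} by substituting $y = 0$ and using $\varepsilon(0) = 0$, which yields $\dd[\dd[f]] \circ \langle\langle x,0\rangle,\langle 0,z\rangle\rangle = \dd[f]\circ\langle x + \varepsilon(0), z\rangle = \dd[f]\circ\langle x,z\rangle$. To derive \textbf{[C$\dd$.7.a]} from \textbf{[C$\dd$.6]} and \textbf{[C$\dd$.7]}, I decompose $\langle z,w\rangle = \langle z,0\rangle + \langle 0,w\rangle$ and apply \textbf{[C$\dd$.2]} in the second argument of $\dd[f]$, obtaining
\[
\dd[\dd[f]]\circ\langle\langle x,y\rangle,\langle z,w\rangle\rangle = \dd[\dd[f]]\circ\langle\langle x,y\rangle,\langle z,0\rangle\rangle + \dd[\dd[f]]\circ\langle\langle x,y\rangle + \varepsilon\langle z,0\rangle,\langle 0,w\rangle\rangle.
\]
The first summand equals $\dd[\dd[f]]\circ\langle\langle x,z\rangle,\langle y,0\rangle\rangle$ by \textbf{[C$\dd$.7]}, while the second, using the preliminary identity and \textbf{[C$\dd$.6]}, collapses to $\dd[f]\circ\langle x+\varepsilon(z)+\varepsilon(y), w\rangle$. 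Performing the same decomposition on $\dd[\dd[f]]\circ\langle\langle x,z\rangle,\langle y,w\rangle\rangle$ yields $\dd[\dd[f]]\circ\langle\langle x,z\rangle,\langle y,0\rangle\rangle + \dd[f]\circ\langle x+\varepsilon(y)+\varepsilon(z), w\rangle$, and these two sums agree by commutativity of $+$.

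For the converse, \textbf{[C$\dd$.7]} is simply \textbf{[C$\dd$.7.a]} instantiated at $w = 0$. The real content is deriving \textbf{[C$\dd$.6]} from \textbf{[C$\dd$.6.a]} and \textbf{[C$\dd$.7.a]}. I first swap via \textbf{[C$\dd$.7.a]} to rewrite $\dd[\dd[f]]\circ\langle\langle x,y\rangle,\langle 0,z\rangle\rangle = \dd[\dd[f]]\circ\langle\langle x,0\rangle,\langle y,z\rangle\rangle$, then split $\langle y,z\rangle = \langle y,0\rangle + \langle 0,z\rangle$ using \textbf{[C$\dd$.2]} to obtain
\[
\dd[\dd[f]]\circ\langle\langle x,0\rangle,\langle y,0\rangle\rangle + \dd[\dd[f]]\circ\langle\langle x,0\rangle + \varepsilon\langle y,0\rangle,\langle 0,z\rangle\rangle.
\]
The second summand simplifies via the preliminary identity to $\dd[\dd[f]]\circ\langle\langle x+\varepsilon(y),0\rangle,\langle 0,z\rangle\rangle$, which by \textbf{[C$\dd$.6.a]} equals $\dd[f]\circ\langle x+\varepsilon(y), z\rangle$. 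For the first summand, I apply \textbf{[C$\dd$.7.a]} once more to rewrite $\dd[\dd[f]]\circ\langle\langle x,0\rangle,\langle y,0\rangle\rangle = \dd[\dd[f]]\circ\langle\langle x,y\rangle,\langle 0,0\rangle\rangle$, and then \textbf{[C$\dd$.2]} (derivatives vanish on the zero change) forces this to be $0$.

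The main obstacle, I expect, is keeping careful track of how $\varepsilon$ distributes over the pairings that appear in the second argument of $\dd[\dd[f]]$; this is where the proof differs non-trivially from the Cartesian differential case, since there we simply have $\varepsilon = 0$ and the rearrangement $\langle x,y\rangle + \varepsilon\langle z,0\rangle = \langle x+\varepsilon(z), y\rangle$ is trivial. Once that preliminary identity is established, each of the four implications reduces to bookkeeping with \textbf{[C$\dd$.2]} and the commutativity of $+$.
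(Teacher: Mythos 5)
Your proof is correct, and it follows essentially the same route as the paper, which itself gives no details but simply defers to the analogous argument in \cite[Proposition 4.2]{cockett2014differential}: decompose the second argument of $\dd[\dd[f]]$ as a sum of pairs, apply \textbf{[C$\dd$.2]}, and track the resulting $\varepsilon$-correction terms, using $\varepsilon\langle g,h\rangle = \langle\varepsilon(g),\varepsilon(h)\rangle$ and commutativity of $+$. Your write-up in fact supplies exactly the $\varepsilon$-bookkeeping that distinguishes the difference-category case from the differential one, and every step (including the use of \textbf{[C$\dd$.7.a]} together with $\dd[\dd[f]]\circ\langle\langle x,y\rangle,0\rangle = 0$ to kill the first summand in the converse direction) checks out.
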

\begin{proof} The proof is essentially the same as \cite[Proposition 4.2]{cockett2014differential}. \hfill $\blacksquare$
\end{proof} 

The keen eyed reader will notice that the axioms of a Cartesian difference category are very similar to the axioms of a Cartesian differential category. Indeed, {\bf [C$\dd$.1]}, {\bf [C$\dd$.3]}, {\bf [C$\dd$.4]}, {\bf [C$\dd$.5]}, and {\bf [C$\dd$.7]} are the same as their Cartesian differential category counterpart. The axioms which are different are {\bf [C$\dd$.2]} and {\bf [C$\dd$.6]} where the infinitesimal extension $\varepsilon$ is now included, and also there is the new extra axiom {\bf [C$\dd$.0]}. On the other hand, interestingly enough, {\bf [C$\dd$.6.a]} is the same as {\bf [CD.6.a]}. We also point out that writing out {\bf [C$\dd$.0]} and {\bf [C$\dd$.2]} using change action notion, we see that these axioms are precisely {\bf [CAD.1]} and {\bf [CAD.2]} respectively. To better understand {\bf [C$\dd$.0]} to {\bf [C$\dd$.7]} it may be useful to write them out using element-like notation. 

In element-like notation, {\bf [C$\dd$.0]} is written as: 
\[
    f(x + \varepsilon(y)) = f(x) + \varepsilon\left(\dd[f](x,y) \right)
\]
This condition can be read as a generalization of the Kock-Lawvere axiom that characterizes the derivative in from synthetic differential geometry \cite{kock2006synthetic}. Broadly speaking, the Kock-Lawvere axiom states that, for any map $f : \mathcal{R} \to \mathcal{R}$ and any $x \in \mathcal{R}$ and $d \in \mathcal{D}$, there exists a unique $f'(x) \in \mathcal{R}$ verifying
\[
    f(x + d) = f(x) + d \cdot f'(x)
\]
where $\mathcal{D}$ is the subset of $\mathcal{R}$ consisting of infinitesimal elements. It is by analogy with the Kock-Lawvere axiom that we refer to $\varepsilon$ as an ``infinitesimal extension'' as it can be thought of as embedding the space $A$ into a subspace $\varepsilon(A)$ of infinitesimal elements. 

{\bf [C$\dd$.1]} states that the differential of a sum of maps is the sum of differentials, and similarly for zero maps and the infinitesimal extension of a map. 
{\bf [C$\dd$.2]} is the first crucial difference between a Cartesian difference category and a Cartesian differential category. In a Cartesian differential category, the differential of a map is assumed to be additive in its second argument. In a Cartesian difference category, just as derivatives for change actions, while the differential is still required to preserve zeros in its second argument, it is only additive ``up to a small perturbation'', that is:
\[\dd[f](x,y+z) = \dd[f](x,y) + \dd[f](x + \varepsilon(y), z)\]
{\bf [C$\dd$.3]} tells us what the differential of the identity and projection maps are, while {\bf [C$\dd$.4]} says that the differential of a pairing of maps is the pairing of their differentials. {\bf [C$\dd$.5]} is the chain rule which expresses what the differential of a composition of maps is:
\[ \dd[g \circ f](x,y) = \dd[g](f(x), \dd[f](x,y))\]
{\bf [C$\dd$.6]} and {\bf [C$\dd$.7]} tell us how to work with second order differentials. {\bf [C$\dd$.6]} is expressed as follows: 
\[\dd \left[ \dd[f] \right]\left( x,y, 0,z \right) =  \dd[f](x + \varepsilon(y),
z) \]
and finally {\bf [C$\dd$.7]} is expressed as:
\[\dd \left[ \dd[f] \right]\left( x,y, z, 0 \right) =  \dd \left[ \dd[f] \right]\left( x, z, y, 0 \right) \]
It is interesting to note that while {\bf [C$\dd$.6]} is different from {\bf [CD.6]}, its alternative version {\bf [C$\dd$.6.a]} is the same as {\bf [CD.6.a]}. 
\[\dd \left[ \dd[f] \right]\left( (x,0), (0,y) \right) =  \dd[f](x, z) \]

Axiom {\bf [C$\dd$.6]} has some interesting and counter-intuitive consequences
which shall be of use later.

\newcommand\Item[1][]{%
  \ifx\relax#1\relax  \item \else \item[#1] \fi
  \abovedisplayskip=0pt\abovedisplayshortskip=0pt~\vspace*{-\baselineskip}
}

\begin{lemma}
  \label{lem:d-epsilon}
  Given any map $f : A \to B$ in a Cartesian difference category its
  derivatives satisfy the following equations for any $x, u, v, w : C \to A$:

  \begin{enumerate}[i.]
    \item $\dd[f] \circ \pair{x}{\varepsilon(u)} = \varepsilon(\dd[f]) \circ
    \pair{x}{u}$
    \label{lem:d-varepsilon-i}
    \item
    $\dd[f] \circ \pair{x}{u + v}
    = \dd[f] \circ \pair{x}{u} + \dd[f] \circ \pair{x + \varepsilon^2(u)}{v}$
    \label{lem:d-varepsilon-ii}
    \item $\varepsilon(\dd^2[f]) \circ \four{x}{u}{v}{0}
    = \varepsilon^2(\dd^2[f]) \circ \four{x}{u}{v}{0}$
    \label{lem:d-varepsilon-iii}
  \end{enumerate}
\end{lemma}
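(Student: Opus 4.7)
The plan is to prove the three items in the order (i), (iii), (ii), since (ii) reduces to (iii) and so (iii) is where the real work lies.

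For (i), I would apply axiom [C$\dd$.0] to the map $\dd[f]$ itself with the substitution $P := \pair{x}{0}$ and $Q := \pair{0}{u}$, so that $P + \varepsilon(Q) = \pair{x}{\varepsilon(u)}$. The right-hand side then reads $\dd[f] \circ \pair{x}{0} + \varepsilon(\dd^2[f] \circ \pair{\pair{x}{0}}{\pair{0}{u}})$; the first summand vanishes by [C$\dd$.2] and the second simplifies via [C$\dd$.6.a] to $\varepsilon(\dd[f] \circ \pair{x}{u})$. Finally, the identity $\varepsilon(g) \circ h = \varepsilon(g \circ h)$ rewrites this as $\varepsilon(\dd[f]) \circ \pair{x}{u}$, as desired.

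For (iii), set $M := \dd^2[f] \circ \four{x}{u}{v}{0}$. By [C$\dd$.7] we may rewrite $M = \dd^2[f] \circ \pair{\pair{x}{v}}{\pair{u}{0}}$, and since $\varepsilon(\pair{u}{0}) = \pair{\varepsilon(u)}{0}$, applying part (i) to the map $\dd[f]$ gives $\dd^2[f] \circ \pair{\pair{x}{v}}{\pair{\varepsilon(u)}{0}} = \varepsilon(M)$; iterating once more we obtain $\dd^2[f] \circ \pair{\pair{x}{v}}{\pair{\varepsilon^2(u)}{0}} = \varepsilon^2(M)$. The statement (iii) is therefore equivalent to the equality of these two values of $\dd^2[f]$, which differ only in that one coordinate carries an extra factor of $\varepsilon$. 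I would close this by applying [C$\dd$.0] to $\dd[f]$ in a form that shifts the first coordinate of the outer second argument and then exploiting the interplay of [C$\dd$.6] with [C$\dd$.6.a] --- precisely the ``counter-intuitive'' consequence of [C$\dd$.6] announced just before the lemma --- to show that, for inputs whose last coordinate is zero, the representation is invariant under one further infinitesimal extension. The main obstacle is that, without cancellation in the additive monoid of maps, the argument must proceed as a single chain of rewrites rather than by subtracting matched terms from both sides of a derived identity.

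For (ii), apply [C$\dd$.2] to the left-hand side to obtain $\dd[f] \circ \pair{x}{u + v} = \dd[f] \circ \pair{x}{u} + \dd[f] \circ \pair{x + \varepsilon(u)}{v}$. It then suffices to replace the shift $\varepsilon(u)$ by $\varepsilon^2(u)$. Applying [C$\dd$.0] to $\dd[f]$ with $P := \pair{x}{v}$ and $Q := \pair{u}{0}$ gives $\dd[f] \circ \pair{x + \varepsilon(u)}{v} = \dd[f] \circ \pair{x}{v} + \varepsilon(N)$ where $N := \dd^2[f] \circ \pair{\pair{x}{v}}{\pair{u}{0}}$, while the same axiom with $Q := \pair{\varepsilon(u)}{0}$, combined with (i), gives $\dd[f] \circ \pair{x + \varepsilon^2(u)}{v} = \dd[f] \circ \pair{x}{v} + \varepsilon^2(N)$. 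By [C$\dd$.7], $N$ coincides with the $M$ of (iii) after interchanging $u$ and $v$, so (iii) supplies $\varepsilon(N) = \varepsilon^2(N)$; substituting this equality into the [C$\dd$.2] expansion yields (ii) directly, with no appeal to cancellation needed.
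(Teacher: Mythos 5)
Your part (i) is correct and is essentially the paper's own argument. Your derivation of (ii) \emph{from} (iii) is also sound: expanding with \CdCax{2} and then applying \CdCax{0} to $\dd[f]$ with the two choices of $Q$ does reduce (ii) to the identity $\varepsilon(N) = \varepsilon^2(N)$ for $N = \dd^2[f] \circ \four{x}{v}{u}{0}$, which is (iii) with $u$ and $v$ interchanged.

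The genuine gap is in (iii), which you yourself identify as ``where the real work lies'' and then do not prove. After the (correct) observation that (iii) amounts to $\dd^2[f] \circ \pair{\pair{x}{v}}{\pair{\varepsilon(u)}{0}} = \dd^2[f] \circ \pair{\pair{x}{v}}{\pair{\varepsilon^2(u)}{0}}$, the sentence about ``applying \CdCax{0} \ldots and then exploiting the interplay of \CdCax{6} with \CdCax{6.a}'' is a hope, not an argument: absorbing one further $\varepsilon$ into a single coordinate is exactly the original difficulty restated one level down, and a naive application of \CdCax{0} followed by \CdCax{6} only reproduces the tautology $\varepsilon(M) = \varepsilon(M)$. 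The paper's proof of (iii) has to climb to the \emph{third} derivative: it rewrites $\varepsilon(\dd^2[f]) \circ \four{x}{u}{v}{0}$ as $\dd^3[f] \circ \pair{\four{x}{0}{0}{0}}{\four{0}{u}{v}{0}}$ using \CdCax{6} and \CdCax{7}, then splits $\four{0}{u}{v}{0}$ as a sum and applies \emph{part (ii)} --- not \CdCax{2} --- so that the surviving summand picks up the shift $\varepsilon^2(u)$; the other summand is killed by \CdCax{7} and \CdCax{2}, and \CdCax{6} brings the result back down to $\dd^2[f] \circ \four{x}{\varepsilon^2(u)}{v}{0} = \varepsilon^2(\dd^2[f]) \circ \four{x}{u}{v}{0}$. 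It is precisely the invocation of (ii) in that splitting step that manufactures the extra $\varepsilon$; none of this machinery (third derivatives, argument permutation, sum splitting) appears in your sketch. This also exposes a structural problem with your ordering: you derive (ii) from (iii), while the only worked proof of (iii) passes through (ii), so as written your proposal is in real danger of being circular. To repair it you must either exhibit a proof of (iii) that genuinely avoids (ii), or prove (ii) first by an argument independent of (iii).
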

\begin{proof}\hspace{0pt}
  \begin{enumerate}[i.]
    \Item
    \begin{align*}
      \dd[f] \circ \pair{g}{\varepsilon(h)}
      = \dd[f]\circ \pair{g}{0}
      + \varepsilon(\dd^2[f]) \circ \four{g}{0}{0}{h}
      = \varepsilon(\dd[f]) \circ \pair{g}{h}
    \end{align*}
    \Item \begin{align*}
      &\varepsilon^2 (\dd^2[f]) \circ \four{x}{u}{v}{0}
      \\
      &= \varepsilon (\dd^2[f]) \circ \four{x}{\varepsilon(v)}{u}{0}
      &\text{(by \CdCax{7} and i.)}
      \\
      &= \varepsilon (\dd^3[f]) \circ \pair{
        \four{x}{0}{0}{v}
      }{
        \four{0}{0}{u}{0}
      }
      &\text{(by \CdCax{6})}
      \\
      &= \dd^3[f] \circ \pair{
        \four{x}{0}{0}{0}
      }{
        \four{0}{\varepsilon(v)}{\varepsilon(u)}{0}
      }
      &\text{(by \CdCax{7} and i.)}
      \\
      &= \dd^3[f] \circ \pair{
        \four{x}{0}{0}{\varepsilon(v)}
      }{
        \four{0}{0}{\varepsilon(u)}{0}
      }
      &\text{(by \CdCax{7})}
      \\
      &= \dd^2[f] \circ \four{x}{\varepsilon^2(v)}{\varepsilon(u)}{0}
      &\text{(by \CdCax{6})}
      \\
      &= \varepsilon^3(\dd^2[f]) \circ \four{x}{v}{u}{0}
      &\text{(by \CdCax{7} and i.)}
    \end{align*}
  \Item
    \begin{align*}
      &\varepsilon(\dd^2[f]) \circ \four{x}{u}{v}{0}
      \\
      &= \dd^2[f] \circ \four{x}{\varepsilon(u)}{v}{0}
      &\text{(by \CdCax{7} and i.)}
      \\
      &= \dd^3[f] \circ \pair{
        \four{x}{0}{0}{u}
      }{
        \four{0}{0}{v}{0}
      }
      &\text{(by \CdCax{6})}
      \\
      &= \dd^3[f] \circ \pair{
        \four{x}{0}{0}{0}
      }{
        \four{0}{u}{v}{0}
      }
      &\text{(by \CdCax{7})}
      \\
      &= \dd^3[f] \circ \pair{
        \four{x}{0}{0}{0}
      }{
        \four{0}{u}{0}{0}
      }\\
      &\quad + \dd^3[f] \circ \pair{
        \four{x}{\varepsilon^2(u)}{0}{0}
      }{
        \four{0}{0}{v}{0}
      }
      &\text{(by ii.)}
      \\
      &= \dd^3[f] \circ \pair{
        \four{x}{\varepsilon^2(u)}{0}{0}
      }{
        \four{0}{0}{v}{0}
      }
      &\text{(by \CdCax{7} and \CdCax{2})}
      \\
      &= \dd^2[f] \circ \four{x}{\varepsilon^2(u)}{v}{0}
      &\text{(by \CdCax{6})}
      \\
      &= \varepsilon^2(\dd^2[f]) \circ \four{x}{u}{v}{0}
      &\text{(by i.)}
    \end{align*}
  \end{enumerate}
  \hfill$\blacksquare$
\end{proof}

\begin{corollary}
  Let $\mathbb{X}$ be a Cartesian difference category with a nilpotent
  infinitesimal extension, that is, for any map $f : A \to B$ there is some
  $k \in \mathbb{N}$ such that $\varepsilon^k(f) = 0$. Then every
  derivative $\dd[f]$ is additive in its second argument.
\end{corollary}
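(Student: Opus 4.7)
The plan is to reduce the claim to showing that $\dd[f] \circ \pair{x + \varepsilon(y)}{z} = \dd[f] \circ \pair{x}{z}$ for all suitable $x, y, z$. Indeed, \CdCax{2} already guarantees both $\dd[f] \circ \pair{x}{0} = 0$ and $\dd[f] \circ \pair{x}{y+z} = \dd[f] \circ \pair{x}{y} + \dd[f] \circ \pair{x + \varepsilon(y)}{z}$; once we know the second summand does not depend on the perturbation $\varepsilon(y)$, standard additivity in the second argument follows immediately.

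First I would apply \CdCax{0} to the map $\dd[f] : A \times A \to B$ itself, with base point $\pair{x}{z}$ and infinitesimal perturbation $\pair{y}{0}$. Since $\varepsilon$ commutes with both projections and with pairings, one has $\varepsilon \pair{y}{0} = \pair{\varepsilon(y)}{0}$, and addition in the product is componentwise, so \CdCax{0} specializes to
\[
\dd[f] \circ \pair{x + \varepsilon(y)}{z}
= \dd[f] \circ \pair{x}{z} + \varepsilon\bigl(\dd^2[f] \circ \four{x}{z}{y}{0}\bigr).
\]
It therefore suffices to prove that the rightmost summand vanishes. Pulling $\varepsilon$ through the composition (by the axioms of an infinitesimal extension) rewrites this summand as $\varepsilon(\dd^2[f]) \circ \four{x}{z}{y}{0}$.

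Now Lemma~\ref{lem:d-epsilon}(iii) gives directly $\varepsilon(\dd^2[f]) \circ \four{x}{z}{y}{0} = \varepsilon^2(\dd^2[f]) \circ \four{x}{z}{y}{0}$. Applying the same lemma with $\varepsilon^{n-1}(f)$ in place of $f$ and invoking \CdCax{1} (which yields $\dd^2[\varepsilon^{n-1}(f)] = \varepsilon^{n-1}(\dd^2[f])$) gives $\varepsilon^n(\dd^2[f]) \circ \four{x}{z}{y}{0} = \varepsilon^{n+1}(\dd^2[f]) \circ \four{x}{z}{y}{0}$, and a routine induction on $n$ then produces
\[
\varepsilon(\dd^2[f]) \circ \four{x}{z}{y}{0}
= \varepsilon^k(\dd^2[f]) \circ \four{x}{z}{y}{0}
\quad \text{for every } k \geq 1.
\]
Finally, by nilpotency we may pick $k$ with $\varepsilon^k(\dd^2[f]) = 0$, so the right-hand side collapses to $0$. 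Substituting this back into the expression derived from \CdCax{0} and then into \CdCax{2} delivers the desired additivity of $\dd[f]$ in its second argument.

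The main subtlety is the last iteration step: we need to ensure that bumping the exponent on $\varepsilon$ is legitimate, which rests on the fact that $\varepsilon$ commutes with the difference combinator up to the identity $\dd[\varepsilon(-)] = \varepsilon(\dd[-])$ of \CdCax{1}. Beyond this bookkeeping, the argument is a straightforward combination of \CdCax{0}, \CdCax{2}, and Lemma~\ref{lem:d-epsilon}.
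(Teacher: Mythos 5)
Your proof is correct and follows essentially the same route as the paper's: expand via \CdCax{2}, apply \CdCax{0} to $\dd[f]$ itself to isolate the error term $\varepsilon(\dd^2[f])\circ\four{x}{z}{y}{0}$, iterate Lemma \ref{lem:d-epsilon}.iii to raise the power of $\varepsilon$, and kill it by nilpotency. Your only addition is to spell out the iteration step (via \CdCax{1} applied to $\varepsilon^{n-1}(f)$), which the paper leaves implicit.
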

\begin{proof}
  Fix a map $f : A \to B$, and consider arbitrary $x, u, v : C \to A$.
  By \CdCax{2}, it follows that $\dd[f]$ is additive ``up to a
  second-order term'', and so:
  \begin{align*}
    &\dd[f](x, u + v)\\ 
    &= \dd[f](x, u) + \dd[f](x + \varepsilon(u), v)
    &\text{(by \CdCax{2})}
    \\
    &= \dd[f](x, u) + \dd[f](x, v) + \varepsilon \dd^2[f](x, v, u, 0)
    &\text{(by \CdCax{0})}
    \\
    &= \dd[f](x, u) + \dd[f](x, v) + \varepsilon^k \dd^2[f](x, v, u, 0)
    &\text{(iterating Lemma \ref{lem:d-epsilon}.iii.)}
    \\
    &= \dd[f](x, u) + \dd[f](x, v)
    &\text{(since $\varepsilon$ is nilpotent)}
  \end{align*}
  \hfill$\blacksquare$
\end{proof}

Examples of Cartesian difference categories can be found in Section \ref{EXsec}.

\subsection{Another look at Cartesian Differential Categories}\label{CDCisCdCsec}

Here we explain how a Cartesian differential category is a Cartesian difference category where the infinitesimal extension is given by zero. 

\begin{proposition} Every Cartesian differential category $\mathbb{X}$ with differential combinator $\mathsf{D}$ is a Cartesian difference category where the infinitesimal extension is defined as $\varepsilon(f) = 0$ and the difference combinator is defined to be the differential combinator, $\dd = \mathsf{D}$.  
\end{proposition}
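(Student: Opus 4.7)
The plan is to verify, in order, that the data $(\varepsilon, \dd) = (0, \mathsf{D})$ satisfies the definition of a Cartesian difference category on top of the Cartesian left additive structure that $\mathbb{X}$ already carries. Since the category is already Cartesian left additive, the only new structural ingredient is the infinitesimal extension, so I would split the argument into two parts: first checking that $\varepsilon(f) := 0$ is a valid infinitesimal extension, and then verifying each axiom \CdCax{0} through \CdCax{7}.

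For the first part, one simply notes that every required equation for $\varepsilon$ reduces to an instance of $0 = 0$: the equalities $\varepsilon(f+g) = \varepsilon(f) + \varepsilon(g)$, $\varepsilon(0) = 0$, $\varepsilon(g \circ f) = \varepsilon(g) \circ f$, $\varepsilon(\pi_0) = \pi_0 \circ \varepsilon(1)$ and $\varepsilon(\pi_1) = \pi_1 \circ \varepsilon(1)$ all become either $0 = 0$ or $0 = 0 \circ h = 0$, using only the left additive structure (in particular that pre-composition preserves zeros and that projections are additive, hence post-composition by them preserves zeros of the form $\varepsilon(1) = 0$).

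For the second part, I would observe that most of the axioms become literally their Cartesian differential counterparts once $\varepsilon$ is replaced by $0$. Specifically, \CdCax{3}, \CdCax{4}, \CdCax{5}, and \CdCax{7} are syntactically identical to \CDCax{3}, \CDCax{4}, \CDCax{5}, and \CDCax{7}, so nothing need be checked. For \CdCax{1}, the first two clauses are exactly \CDCax{1}, while the new third clause $\dd[\varepsilon(f)] = \varepsilon(\dd[f])$ reduces to $\mathsf{D}[0] = 0$, which is the second clause of \CDCax{1}. For \CdCax{2}, substituting $\varepsilon(y) = 0$ on the right-hand side collapses $x + \varepsilon(y)$ to $x$, recovering \CDCax{2} exactly. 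Finally, \CdCax{0} reduces to the tautology $f \circ (x + 0) = f \circ x + 0$ using only the fact that $0$ is the identity for addition, and \CdCax{6} collapses in the same way to \CDCax{6}.

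There is essentially no obstacle here: the proof is a routine line-by-line check that the extra terms introduced by $\varepsilon$ disappear, and the tricky axioms involving mixed additivity and second-order derivatives degenerate back to their Cartesian differential analogues. The only small subtlety worth mentioning explicitly is the third clause of \CdCax{1}, since it is not a direct transcription of any \CDCax{\cdot} axiom but rather an identity which holds because both sides evaluate to the zero map.
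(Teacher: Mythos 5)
Your proposal is correct and follows essentially the same route as the paper's proof: both arguments observe that with $\varepsilon = 0$ the axioms \CdCax{1} (first two clauses), \CdCax{3}--\CdCax{5}, and \CdCax{7} are literally their \CDCax{\cdot} counterparts, that \CdCax{0} and the third clause of \CdCax{1} become trivial, and that \CdCax{2} and \CdCax{6} collapse to \CDCax{2} and \CDCax{6}. Your explicit remark that the third clause of \CdCax{1} reduces to $\mathsf{D}[0]=0$ (an instance of \CDCax{1}) rather than being vacuous is a slightly more careful reading than the paper's, but the substance is identical.
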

\begin{proof} As noted before, the first two parts of the {\bf [C$\dd$.1]}, the second part of {\bf [C$\dd$.2]}, {\bf [C$\dd$.3]}, {\bf [C$\dd$.4]}, {\bf [C$\dd$.5]}, and {\bf [C$\dd$.7]} are precisely the same as their Cartesian differential axiom counterparts. On the other hand, since $\varepsilon(f) =0$, {\bf [C$\dd$.0]} and the third part of {\bf [C$\dd$.1]} trivial state that $0=0$, while the first part of {\bf [C$\dd$.2]} and {\bf [C$\dd$.6]} end up being precisely the first part of {\bf [CD.2]} and {\bf [CD.6]}. Therefore, the differential combinator satisfies the Cartesian difference axioms and we conclude that a Cartesian differential category is a Cartesian difference category. 
\hfill $\blacksquare$
\end{proof} 

Conversely, one can always build a Cartesian differential category from a Cartesian difference category by considering the objects for which the infinitesimal extension is the zero map. 

\begin{proposition}\label{CdtoCD} For a Cartesian difference category $\mathbb{X}$ with infinitesimal extension $\varepsilon$ and difference combinator $\dd$, then $\mathbb{X}_{0}$, the full subcategory of objects $A$ such that $\varepsilon(1_A)= 0$, is a Cartesian differential category where the differential combinator is defined to be the difference combinator, $\mathsf{D} = \dd$. 
\end{proposition}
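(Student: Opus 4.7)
The plan is to verify three things in order: first, that $\mathbb{X}_0$ is a well-defined Cartesian left additive subcategory (closed under products, contains a terminal object, and inherits the additive structure); second, that the infinitesimal extension $\varepsilon$ vanishes identically on maps between objects of $\mathbb{X}_0$; and third, that under this vanishing, each Cartesian difference axiom \CdCax{1}--\CdCax{7} collapses to the corresponding Cartesian differential axiom \CDCax{1}--\CDCax{7}.

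For the closure step, I would first note that $\top \in \mathbb{X}_0$ because $\mathbb{X}(\top,\top)$ is forced to be the trivial monoid, so $1_\top = 0$ and thus $\varepsilon(1_\top) = \varepsilon(0) = 0$. For products, suppose $A, B \in \mathbb{X}_0$. Using $\varepsilon(g \circ f) = \varepsilon(g) \circ f$, I get $\varepsilon(\pi_0) = \varepsilon(1_A \circ \pi_0) = \varepsilon(1_A) \circ \pi_0 = 0$, and similarly $\varepsilon(\pi_1) = 0$. Combined with the defining relations $\varepsilon(\pi_i) = \pi_i \circ \varepsilon(1_{A \times B})$ and the universal property of the product, this yields $\varepsilon(1_{A \times B}) = 0$, so $A \times B \in \mathbb{X}_0$. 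Since $\mathbb{X}_0$ is full, it inherits the Cartesian left additive structure.

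For the vanishing step, I would observe the simple but crucial consequence that for any $f : A \to B$ with $B \in \mathbb{X}_0$, we have $\varepsilon(f) = \varepsilon(1_B \circ f) = \varepsilon(1_B) \circ f = 0 \circ f = 0$. In particular, $\varepsilon$ is identically zero on all hom-sets of $\mathbb{X}_0$. I also need to check that $\dd$ restricts to a combinator on $\mathbb{X}_0$: if $f : A \to B$ is in $\mathbb{X}_0$, then $\dd[f] : A \times A \to B$ has domain and codomain in $\mathbb{X}_0$ by the product closure above, so this is automatic.

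For the axiom check, the comparison at the end of Section \ref{CdCsec} already notes that \CdCax{1} (first two parts), \CdCax{3}, \CdCax{4}, \CdCax{5}, and \CdCax{7} coincide with their \CDCax{} counterparts. In $\mathbb{X}_0$, since $\varepsilon = 0$ on all relevant homsets, the third part of \CdCax{1} becomes $0 = 0$; axiom \CdCax{2} simplifies (replacing $x + \varepsilon(y)$ by $x$) to \CDCax{2}; axiom \CdCax{6} likewise simplifies to \CDCax{6}; and axiom \CdCax{0} reduces to the trivial identity $f \circ x = f \circ x + 0$. There is no real obstacle here — the main thing to be careful about is the product closure of $\mathbb{X}_0$, since without it one cannot even speak of $\dd[f]$ as a morphism of $\mathbb{X}_0$. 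Everything else is a direct substitution of $\varepsilon = 0$ into the Cartesian difference axioms.
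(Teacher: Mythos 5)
Your proposal is correct and follows essentially the same route as the paper's proof: establish closure of $\mathbb{X}_0$ under finite products and the terminal object, observe that $\varepsilon$ vanishes on all hom-sets of $\mathbb{X}_0$ via $\varepsilon(f) = \varepsilon(1_B) \circ f = 0$, and then note that each difference axiom collapses to the corresponding differential axiom. The only difference is that you spell out the product-closure argument (via $\varepsilon(\pi_i) = \pi_i \circ \varepsilon(1_{A\times B})$ and the universal property) which the paper leaves implicit.
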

\begin{proof} First note that if $\varepsilon(1_A)=0$ and $\varepsilon(1_B)=0$, then by definition it also follows that $\varepsilon(1_{A \times B}) = 0$, and also that for the terminal object $\varepsilon(1_\top) = 0$ by uniqueness of maps into the terminal object. Thus $\mathbb{X}_{0}$ is closed under finite products and is therefore a Cartesian left additive category. Furthermore, we again note that since $\varepsilon(f) =0$, this implies that for maps between such objects the Cartesian difference axioms are precisely the Cartesian differential axioms. Therefore, the difference combinator is a differential combinator for this subcategory, and so $\mathbb{X}_{0}$ is a Cartesian differential category. 
\hfill $\blacksquare$
\end{proof} 

In any Cartesian difference category $\mathbb{X}$, the terminal object $\top$ always satisfies that $\varepsilon(1_\top) = 0$, and so therefore, $\mathbb{X}_{0}$ is never empty. On the other hand, applying Proposition \ref{CdtoCD} to a Cartesian differential category results in the entire category. It is also important to note that the above two propositions do not imply that if a difference combinator is a differential combinator then infinitesimal extension must be zero. In Section \ref{moduleex}, we provide such an example of a Cartesian differential category which comes equipped with a non-zero infinitesimal extension such that the differential combinator is a difference combinator with respect to this non-zero infinitesimal extension. 

\subsection{Cartesian Difference Categories as Change Action Models}\label{CdCisCAsec}

In this section we show how every Cartesian difference category is a particularly well-behaved change action model, and conversely how every change action model contains a Cartesian difference category. 

\begin{proposition} Let $\mathbb{X}$ be a Cartesian difference category with infinitesimal extension $\varepsilon$ and difference combinator $\dd$. Define  the functor $\alpha: \mathbb{X} \to \mathsf{CAct}(\mathbb{X})$ as $\alpha(A) = (A, A, \oplus_A, +_A, 0_A)$ (as defined in Lemma \ref{caelem}) and $\alpha(f) = (f, \dd[f])$. Then $(\mathbb{X}, \alpha: \mathbb{X} \to \mathsf{CAct}(\mathbb{X}))$ is a change action model.  
\end{proposition}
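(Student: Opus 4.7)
The plan is to unpack the definition of a change action model and check each piece, using the fact that when translated via change action notation, most of the required conditions correspond directly to Cartesian difference axioms. The conditions to check are: (i) $\alpha$ is well-defined on objects (each $\alpha(A)$ is a change action), (ii) $\alpha$ is well-defined on morphisms (each $\dd[f]$ is a derivative of $f$ in the sense of Definition~\ref{def:derivative}), (iii) $\alpha$ preserves identities and composition, (iv) $\alpha$ is product-preserving, and (v) $\mathcal{E} \circ \alpha = 1_{\mathbb{X}}$. The last condition holds by construction, since $\alpha(A)$ has base object $A$ and $\alpha(f)$ has underlying map $f$.

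For (i), Lemma~\ref{caelem} already provides what we need, so it suffices to invoke it. For (ii), I would observe, as already remarked in the paper after the definition of a Cartesian difference category, that rewriting \CdCax{0} and \CdCax{2} in change action notation yields precisely \Eax{1} and \Eax{2}; hence $\dd[f]$ is automatically a derivative of $f$ with respect to the canonical change actions $\alpha(A)$ and $\alpha(B)$. For (iii), the identity axiom $\dd[1_A] = \pi_1$ from \CdCax{3} gives $\alpha(1_A) = (1_A, \pi_1)$, which is exactly the identity in $\mathsf{CAct}(\mathbb{X})$, and the chain rule \CdCax{5} is exactly the formula for composition in $\mathsf{CAct}(\mathbb{X})$, so $\alpha(g \circ f) = \alpha(g) \circ \alpha(f)$.

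The main technical work lies in (iv), product preservation. I would first describe the product in $\mathsf{CAct}(\mathbb{X})$ componentwise: the product of change actions $\cact{A}$ and $\cact{B}$ has base object $A \times B$, change object $A \times B$ (since in our setting $\Delta A = A$ and $\Delta B = B$), and structure maps built componentwise from $\oplus_A, \oplus_B, +_A, +_B, 0_A, 0_B$. One must verify that $\alpha(A \times B)$ agrees with this product structure. Because the projections are additive and $\varepsilon$ commutes with composition (so $\varepsilon(\pi_i) = \pi_i \circ \varepsilon(1_{A \times B})$), the maps $\oplus_{A \times B}$, $+_{A \times B}$ and $0_{A \times B}$ factor coordinatewise as required. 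Then one checks that $\alpha(\pi_i) = (\pi_i, \dd[\pi_i])$ agrees with the projection in $\mathsf{CAct}(\mathbb{X})$ (using $\dd[\pi_i] = \pi_i \circ \pi_1$ from \CdCax{3}), and that $\alpha(\langle f, g \rangle) = \langle \alpha(f), \alpha(g) \rangle$ thanks to \CdCax{4}; terminality is handled by \CdCax{4} similarly.

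The main obstacle, such as it is, is the verification of product preservation: it is not hard, but it requires unwinding how products are formed in $\mathsf{CAct}(\mathbb{X})$ and carefully checking that the componentwise description of the change action on $A \times B$ coincides with the one given by the canonical construction of Lemma~\ref{caelem}. Once this bookkeeping is done, the proposition follows immediately.
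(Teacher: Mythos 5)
Your proof is correct and follows essentially the same route as the paper's: objects via Lemma~\ref{caelem}, derivatives via \CdCax{0} and \CdCax{2} (which the paper labels {\bf [CAD.1]} and {\bf [CAD.2]} --- note your citation of \Eax{1}/\Eax{2} points at the wrong axiom family), functoriality via \CdCax{3} and \CdCax{5}, product preservation via \CdCax{3} and \CdCax{4}, and the section property by construction. The only difference is that you spell out the componentwise bookkeeping for products in $\mathsf{CAct}(\mathbb{X})$, which the paper leaves implicit.
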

\begin{proof} By Lemma \ref{caelem}, $(A, A, \oplus_A, +_A, 0_A)$ is a change action and so $\alpha$ is well-defined on objects. While for a map $f$, $\dd[f]$ is a derivative of $f$ in the change action sense since {\bf [C$\dd$.0]} and {\bf [C$\dd$.2]} are precisely {\bf [CAD.1]} and {\bf [CAD.2]}, and so $\alpha$ is well-defined on maps. That $\alpha$ preserves identities and composition follows from {\bf [C$\dd$.3]} and {\bf [C$\dd$.5]} respectively, and so $\alpha$ is a functor. That $\alpha$ preserves finite products will follow from  {\bf [C$\dd$.3]} and {\bf [C$\dd$.4]}. Lastly, it is clear that $\alpha$ section of the forgetful functor, and therefore we conclude that $(\mathbb{X}, \alpha)$ is a change action model. 
\hfill $\blacksquare$
\end{proof} 

It is clear that not every change action model is a Cartesian difference category, since for example, change action models do not require addition to be commutative. On the other hand, it can be shown that every change action model contains a Cartesian difference category as a full subcategory.

\begin{definition}
    Let $(\mathbb{X}, \alpha : \mathbb{X} \to \mathsf{CAct}(\mathbb{X}))$ be a change action model. An object $A$ is \textbf{flat} whenever the following hold:
    \begin{enumerate}[{\bf [F.1]}]
    \item $\Delta A = A$
    \item $\alpha(\oplus_A) = (\oplus_A, \oplus_A \circ \pi_1)$
    \item $0 \oplus_A (0 \oplus_A f) = 0 \oplus_A f$ for any $f : U \to A$.
    \item $\oplus_A$ is right-injective, that is, if $\oplus_A \circ \langle f, g \rangle = \oplus_A \circ \langle f, h \rangle$ then $g=h$. 
    \end{enumerate}
\end{definition}

We would like to show that for any change action model $(\mathbb{X}, \alpha)$, its full subcategory of flat objects, $\mathsf{Flat}_\alpha$, is a Cartesian difference category. Starting with the finite product structure, since $\alpha$ preserves finite products, it is straightforward to see that $\top$ is flat and if $A$ and $B$ are flat then so is $A \times B$. The sum of maps $f: A \to B$ and $g: A \to B$ in $\mathsf{Flat}_\alpha$ is defined using the change action structure $f +_B g$, while the zero map $0: A \to B$ is $0 = 0_B \circ !_A$. And so we obtain that: 

\begin{lemma}\label{EUCLCLAC} $\mathsf{Flat}_\alpha$ is a Cartesian left additive category.
\end{lemma}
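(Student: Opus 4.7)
The plan is to verify the three ingredients of a Cartesian left additive category (Definition~\ref{CLACdef}) for $\mathsf{Flat}_\alpha$: that it inherits finite products from $\mathbb{X}$, that each hom-set carries a commutative monoid structure, and that pre-composition and the projections are compatible with this structure. The product part is easy: since $\alpha$ preserves finite products, $\top$ is flat (the four flatness conditions hold trivially or by terminal uniqueness) and whenever $A, B$ are flat we have $\alpha(A \times B) = \alpha(A) \times \alpha(B)$, so all change action data, right-injectivity, and the idempotency condition \textbf{[F.3]} transfer componentwise to the product.

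The proposed monoid operation on $\mathsf{Flat}_\alpha(A, B)$ is $f +_B g \defeq {+_B} \circ \langle f, g \rangle$ with unit $0_B \circ !_A$. Associativity and the unit laws are immediate from $(\Delta B, +_B, 0_B) = (B, +_B, 0_B)$ already being a monoid in $\mathbb{X}$ by the change action axioms. Pre-composition preserves this structure since $\langle f, g \rangle \circ h = \langle f \circ h, g \circ h \rangle$, and additivity of the projections $\pi_0, \pi_1$ follows from the componentwise description of $+_{A \times B}$ furnished by product-preservation of $\alpha$. None of these verifications actually requires the flatness hypotheses.

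I expect commutativity of $+_B$ to be the main obstacle, since change action models in general permit non-commutative addition, and all four flatness axioms must pull their weight here. The argument proceeds in three moves. First, \textbf{[F.3]} states that $\oplus_B \circ \langle 0, \oplus_B \circ \langle 0, f \rangle \rangle = \oplus_B \circ \langle 0, f \rangle$, and right-injectivity of $\oplus_B$ from \textbf{[F.4]} collapses this to the left unit law $0 \oplus_B f = f$. Second, using \textbf{[F.2]} that $\dd[\oplus_B] = \oplus_B \circ \pi_1$, the derivative axiom \textbf{[CAD.1]} applied to $\oplus_B$ with inputs $\langle a, b \rangle$ and $\langle c, d \rangle$ yields the Eckmann-Hilton-style interchange identity
\[
(a \oplus_B c) \oplus_B (b \oplus_B d) = (a \oplus_B b) \oplus_B (c \oplus_B d),
\]
and specializing to $a = d = 0$, using both unit laws for $\oplus_B$, gives the commutativity $c \oplus_B b = b \oplus_B c$. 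Third, the second action axiom $a \oplus_B (b +_B c) = (a \oplus_B b) \oplus_B c$ evaluated at $a = 0$ (again using $0 \oplus_B x = x$) collapses $+_B$ to $\oplus_B$, so the monoid operation coincides with the action and is therefore commutative. Assembling these pieces yields the claim.
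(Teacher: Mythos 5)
Your proof is correct, and its backbone---the interchange law $(a \oplus_B c) \oplus_B (b \oplus_B d) = (a \oplus_B b) \oplus_B (c \oplus_B d)$ obtained by applying {\bf [CAD.1]} to $\oplus_B$ with the derivative supplied by {\bf [F.2]}, followed by cancellation via the right-injectivity of {\bf [F.4]}---is the same mechanism the paper uses for commutativity. Where you diverge is that the paper never invokes {\bf [F.3]} in this lemma: it computes $0 \oplus_B (f +_B g) = (0\oplus_B f)\oplus_B g = (0\oplus_B g)\oplus_B f = 0\oplus_B(g +_B f)$ using the action axiom and the interchange, and then cancels the outer $0 \oplus_B (-)$ by {\bf [F.4]} directly. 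You instead first combine {\bf [F.3]} with {\bf [F.4]} to extract the left unit law $0 \oplus_B f = f$, and from there conclude both that $\oplus_B$ is commutative and that $+_B$ coincides with $\oplus_B$. This is a valid deduction from the axioms as stated and proves strictly more than the lemma asks; but note what that extra strength entails: the infinitesimal extension on $\mathsf{Flat}_\alpha$ is later defined as $\varepsilon(f) = 0 \oplus_B f$, so your identity $0 \oplus_B f = f$ forces $\varepsilon$ to be the identity on every flat object. The paper's subsequent remarks treat {\bf [F.3]} as doing subtler work (it is said to be needed only for {\bf [C$\dd$.6]}), so the more economical argument that avoids {\bf [F.3]} keeps the lemma's dependencies minimal---while your route, though equally rigorous, quietly exposes that {\bf [F.3]} and {\bf [F.4]} together collapse the action to the addition.
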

\begin{proof} Most of the Cartesian left additive structure is straightforward. However, since addition is not required to be commutative for arbitrary change actions, we will show that the addition is commutative for flat objects. Using that $\oplus_B$ is an action, that by {\bf [F.2]} we have that $\oplus_B \circ \pi_1$ is a derivative for $\oplus_B$, and {\bf [CAD.1]}, we obtain that: 
\begin{align*}
        0_B \oplus_B (f +_B g) = (0_B \oplus_B f) \oplus_B g
        = (0_B \oplus_B g) \oplus_B f
        = 0_B \oplus_B (g +_B f)
    \end{align*}
    By {[\bf F.4]}, $\oplus_B$ is right-injective and we conclude that $f + g = g + f$. \hfill $\blacksquare$
\end{proof}

We use the action of the change action structure to define the infinitesimal extension. So for a map $f: A \to B$ in $\mathsf{Flat}_\alpha$, define $\varepsilon(f): A \to B$ as follows:
\[ \varepsilon(f) = \oplus_B \circ \pair{0_B \circ{} !_A}{f} = 0 \oplus_B f \]

\begin{lemma} $\varepsilon$ is an infinitesimal extension for $\mathsf{Flat}_\alpha$. 
\end{lemma}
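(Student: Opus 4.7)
The plan is to verify each of the defining conditions of an infinitesimal extension for the operator $\varepsilon(f) = 0 \oplus_B f = \oplus_B \circ \langle 0_B \circ {!_A}, f \rangle$ on each homset $\mathsf{Flat}_\alpha(A,B)$. These split into (i) the monoid morphism axioms $\varepsilon(f+g) = \varepsilon(f) + \varepsilon(g)$ and $\varepsilon(0) = 0$, (ii) compatibility with precomposition $\varepsilon(g \circ f) = \varepsilon(g) \circ f$, and (iii) the two projection identities $\varepsilon(\pi_i) = \pi_i \circ \varepsilon(1_{A \times B})$.

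The main workhorse will be a bilinearity identity for $\oplus_B$ obtained from \textbf{[F.2]}. Since $\alpha$ preserves products and $B$ is flat, the change action on $B \times B$ is the product change action, so both $+_{B \times B}$ and $0_{B \times B}$ act componentwise. Axiom \textbf{[F.2]} states that $\oplus_B \circ \pi_1$ is a derivative of $\oplus_B$; unfolding \textbf{[CAD.2]} against this pair and noting that $\pi_1$ discards the first argument yields, for any $y_1, y_2, z_1, z_2 : C \to B$,
\[
(y_1 + z_1) \oplus_B (y_2 + z_2) = (y_1 \oplus_B y_2) + (z_1 \oplus_B z_2).
\]
Setting $y_1 = z_1 = 0$ and $y_2 = f$, $z_2 = g$ immediately gives the additivity of $\varepsilon$. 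Preservation of the zero map, $\varepsilon(0) = 0$, follows by precomposing the action identity $\oplus_B \circ \langle 1_B, 0_B \circ {!_B} \rangle = 1_B$ with $0_B \circ {!_A} : A \to B$ and invoking uniqueness of the terminal map to rewrite $0_B \circ {!_B} \circ 0_B \circ {!_A} = 0_B \circ {!_A}$.

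Compatibility with precomposition is direct from the definitions: $\varepsilon(g \circ f) = \oplus_B \circ \langle 0_B \circ {!_A}, g \circ f \rangle = \oplus_B \circ \langle 0_B \circ {!_{A'}}, g \rangle \circ f = \varepsilon(g) \circ f$, using $!_{A'} \circ f = !_A$ for any $f : A \to A'$. For the projection identities, product preservation of $\alpha$ implies $\pi_i \circ \oplus_{A \times B}$ factors componentwise through $\oplus_A$ or $\oplus_B$, and composing with $\langle 0_{A \times B} \circ {!_{A \times B}}, 1_{A \times B} \rangle$ reduces the right-hand side $\pi_i \circ \varepsilon(1_{A \times B})$ to $\oplus_{A_i} \circ \langle 0_{A_i} \circ {!}, \pi_i \rangle = \varepsilon(\pi_i)$.

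The main obstacle is the derivation of the bilinearity identity from \textbf{[F.2]}: one has to carefully unpack the product change action on $B \times B$ and exploit the fact that the candidate derivative $\oplus_B \circ \pi_1$ drops its first argument in order for \textbf{[CAD.2]} to collapse to the stated bilinearity. Once that identity is in hand, the remaining axioms are routine. Note that axioms \textbf{[F.3]} and \textbf{[F.4]} are not needed for this lemma; they will be used later to verify the more delicate \textbf{[C$\dd$]} axioms for the candidate difference combinator on $\mathsf{Flat}_\alpha$.
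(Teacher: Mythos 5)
Your proof is correct, but for the key step (additivity of $\varepsilon$) it takes a genuinely different route from the paper. The paper derives $0_B \oplus_B \varepsilon(f+g) = 0_B \oplus_B (\varepsilon(f)+\varepsilon(g))$ by combining the action law $x \oplus (y+z) = (x\oplus y)\oplus z$ with the interchange identity $(x\oplus u)\oplus(y\oplus v)=(x\oplus y)\oplus(u\oplus v)$ obtained from \textbf{[F.2]} via \textbf{[CAD.1]}, and then cancels the prefix $0_B \oplus_B(-)$ using right-injectivity of $\oplus_B$. You instead instantiate \textbf{[CAD.2]} at $\oplus_B$ with derivative $\oplus_B \circ \pi_1$ to get the bilinearity identity $(y_1+z_1)\oplus_B(y_2+z_2)=(y_1\oplus_B y_2)+(z_1\oplus_B z_2)$, from which $\varepsilon(f+g)=\varepsilon(f)+\varepsilon(g)$ drops out directly with $y_1=z_1=0$ — no cancellation step, hence no appeal to \textbf{[F.4]} (nor \textbf{[F.3]}, which is what the paper's text nominally cites at the cancellation step). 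Both arguments ultimately rest on \textbf{[F.2]} together with the componentwise structure of the product change action, but yours uses strictly fewer of the flatness axioms for this lemma, which is a small but real improvement in hypothesis bookkeeping; the paper's version has the advantage of reusing verbatim the computation pattern already established for commutativity of $+$ in Lemma~\ref{EUCLCLAC}. Your treatment of $\varepsilon(0)=0$, precomposition, and the projection identities fills in details the paper leaves as ``proven in a similar fashion'' and is sound.
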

\begin{proof} We show that $\varepsilon$ preserves the addition. Following the
same idea as in the proof of Lemma \ref{EUCLCLAC}, we obtain the following: 
    \begin{align*}
 0_B \oplus_B \varepsilon(f +_B g) &= 0_B \oplus_B (0_B \oplus_B (f +_B g))\\
        &= (0_B \oplus_B 0_B) \oplus_B ((0_B \oplus_B f) \oplus_B g)\\
        &= (0_B \oplus_B (0_B \oplus_B f)) \oplus_B (0_B \oplus_B g)\\
        &= (0_B \oplus_B \varepsilon(f)) \oplus_B \varepsilon(g)\\
        &= 0_B \oplus_B (\varepsilon(f) +_B \varepsilon(g))
    \end{align*}
      Then by {[\bf F.3]}, it follows that $\varepsilon(f + g) = \varepsilon(f) + \varepsilon(g)$. The remaining infinitesimal extension axioms are proven in a similar fashion. \hfill $\blacksquare$
\end{proof}

Lastly, the difference combinator for $\mathsf{Flat}_\alpha$ is defined in the
obvious way, that is, $\dd[f]$ is defined as the second component of
$\alpha(f)$. 

\newcommand{\Fax}[1]{[{\bf F.{#1}}]}
\begin{proposition} Let $(\mathbb{X}, \alpha : \mathbb{X} \to
\mathsf{CAct}(\mathbb{X}))$ be a change action model. Then
$\mathsf{Flat}_\alpha$ is a Cartesian difference category.
\end{proposition}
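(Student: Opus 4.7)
The plan is to set $\dd[f]$ equal to the second component of $\alpha(f)$ and verify each of the axioms \CdCax{0}--\CdCax{7} of a Cartesian difference combinator in turn. The preceding two lemmas already provide the Cartesian left additive structure on $\mathsf{Flat}_\alpha$ and the infinitesimal extension $\varepsilon(f) = 0 \oplus_B f$; closure of $\mathsf{Flat}_\alpha$ under finite products follows from the fact that $\alpha$ preserves products and that the flatness conditions \Fax{1}--\Fax{4} are stable under such constructions. The central computational identity I rely on throughout is $f + \varepsilon(g) = f \oplus_B g$ on flat objects, which follows from $+_B = \oplus_B$ (via \Fax{1}) together with a short computation invoking that $\oplus_B$ is a monoid action. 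Under this translation, \CdCax{0} is literally [CAD.1] and \CdCax{2} is literally [CAD.2], so they hold automatically by the definition of $\alpha$.

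Functoriality of $\alpha$ immediately yields \CdCax{5} (the chain rule) and the identity part of \CdCax{3}, while product-preservation yields \CdCax{4} and the projection part of \CdCax{3}. The three identities in \CdCax{1} will be proved by expressing each of $f+g$, $\varepsilon(f)$, and $0$ as a composite involving $\oplus_B$ or $0_B$ (namely $\oplus_B \circ \pair{f}{g}$, $\oplus_B \circ \pair{0}{f}$, and $0_B \circ{}!_A$), and then combining \CdCax{5} with \Fax{2}, which pins down $\dd[\oplus_B]$ to be $\oplus_B \circ \pi_1$.

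The main obstacle is axioms \CdCax{6} and \CdCax{7}, since these are constraints on the \emph{specific} second derivative $\dd[\dd[f]]$ chosen by $\alpha$, and such constraints do not follow from [CAD.1]--[CAD.2] alone (derivatives in a change action model are not uniquely determined). The key technical tool is right-injectivity \Fax{4}, which provides a non-trivial cancellation law: if $u \oplus_B v = u \oplus_B w$ then $v = w$. Since $\oplus_B$-equalities can be produced by rewriting $+\varepsilon$-sums via the translation identity, the strategy is to compute expressions like $\dd[f] \circ \pair{x}{y + \varepsilon(z)}$ along two distinct paths---once by applying \CdCax{0} to $\dd[f]$ itself, yielding a $\dd^2$-term of the form $\dd[f]\pair{x}{y} + \varepsilon(\dd[\dd[f]] \circ \four{x}{y}{0}{z})$, and once by applying \CdCax{2} to $f$ and then further expanding the resulting $\dd[f] \circ \pair{x + \varepsilon(y)}{\varepsilon(z)}$ via another application of \CdCax{0}---equating the two, and invoking \Fax{4} to extract the desired second-derivative identity. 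Establishing \CdCax{6} in this manner also requires a separate treatment of the simpler variant \CdCax{6.a}, in which \Fax{3} plays a role by forcing $\varepsilon$ to be idempotent on flat objects, so that nested applications of $\varepsilon$ collapse. Axiom \CdCax{7} will be handled by analogous two-way expansions that exploit the symmetry of the product change action on $A \times A$ and again finish with \Fax{4}. The genuinely hard part is organising each computation so that the resulting equation presents as an $\oplus_B$-equality amenable to \Fax{4}, since in the absence of ordinary monoid cancellation this is the only cancellation principle available.
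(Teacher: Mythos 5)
Your overall architecture matches the paper's: define $\dd[f]$ as the second component of $\alpha(f)$, observe that \CdCax{0} and \CdCax{2} are literally [CAD.1] and [CAD.2], obtain \CdCax{3}--\CdCax{5} from functoriality and product preservation of $\alpha$, use the translation identity $u \oplus_B w = u + \varepsilon(w)$ together with right-injectivity \Fax{4} as the cancellation engine, and isolate \Fax{3} as the extra ingredient needed for \CdCax{6}. However, there is a genuine error at the foundation: you assert that $+_B = \oplus_B$ on flat objects ``via \Fax{1}''. \Fax{1} only says $\Delta A = A$; it does not identify the action with the monoid operation. If $+_B$ did equal $\oplus_B$ then $\varepsilon(f) = 0 \oplus_B f = 0 +_B f = f$, the infinitesimal extension would be the identity, and the whole construction would collapse (contradicting, e.g., the stream example, where $\oplus$ truncates the increment). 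The identity $u \oplus_B w = u + \varepsilon(w)$ is true but must be earned: the paper derives it by applying [CAD.1] to $\oplus_B$ itself (whose derivative is pinned to $\oplus_B \circ \pi_1$ by \Fax{2}) to get $0 \oplus (u \oplus w) = (0 \oplus u) \oplus (0 \oplus w) = 0 \oplus \bigl(u + \varepsilon(w)\bigr)$, and then cancelling with \Fax{4}.

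This error propagates into your \CdCax{1} plan. You write $f + g = \oplus_B \circ \pair{f}{g}$, which is false in general: the sum is $+_B \circ \pair{f}{g}$, whereas $\oplus_B \circ \pair{f}{g} = f + \varepsilon(g)$. Moreover \Fax{2} constrains $\dd[\oplus_B]$, not $\dd[+_B]$, and what \CdCax{1} actually needs is $\dd[+_B] = +_B \circ \pi_1$; that fact requires its own argument, which the paper supplies by showing $(u+v) \oplus (w+l) = (u \oplus w) + (v \oplus l) = (u+v) \oplus \dd[+](u,v,w,l)$ and cancelling via \Fax{4}. (Your route for $\dd[\varepsilon(f)]$ via $\varepsilon(f) = \oplus_B \circ \pair{0}{f}$ and \Fax{2} is fine.) One further caution on \CdCax{6}: the two-way expansion you describe, finished with \Fax{4}, only delivers the reduction of \CdCax{6} to \CdCax{6a} together with the weaker identity $\varepsilon\bigl(\dd^2[f] \circ \four{x}{0}{0}{u}\bigr) = \dd[f] \circ \pair{x}{\varepsilon(u)}$; upgrading this to \CdCax{6a} is where the real work lies, and the paper does it with a long chain using \Fax{3}, the regularity clause of [CAD.2], and a final \Fax{4} cancellation. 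Your remark that nested applications of $\varepsilon$ collapse is the right ingredient but does not yet substitute for that computation.
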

\begin{proof}
  \CdCax{0} and \CdCax{2} are simply a restatement of the derivative condition.
  \CdCax{3} and \CdCax{4} follow immediately from the fact that $\alpha$
  preserves finite products and from the structure of products in
  $\mathsf{CAct}(\mathbb{X})$ (as per \cite[Section 3.2]{alvarez2019change}),
  while \CdCax{5} follows from the definition of composition in
  $\mathsf{CAct}(\mathbb{X})$.
  
  We now prove \CdCax{1}. First, by definition of $+$ and applying the chain
  rule, we have:
  \begin{align*}
    \dd[f + g] = \dd[+ \circ \pair{f}{g}] = \dd[+_{B}] \circ \pair{\pair{f}{g}
  \circ \pi_1}{ \pair{\dd[f]}{\dd[g]}}
  \end{align*}
  It suffices to show that $\dd[+] = + \circ \pi_1$, from which \CdCax{1}
  will follow. Consider arbitrary maps $u, w : A \to B$. Then (again using
  set-like notation) we have: 
  \begin{align*}
    0 \oplus \big(u \oplus w\big)
    = (0 \oplus u) \oplus (0 \oplus w)
    = 0 \oplus \big( u + (0 \oplus w)\big)
  \end{align*}
  By \Fax{4} we obtain the identity below:
  \begin{align}
    u \oplus w = u + \varepsilon(w)
    \label{oplus-epsilon} 
  \end{align}
  From this, it
  follows that, for any $u, v, w, l : A \to B$, the maps
  $ (u \oplus w) + (v \oplus l) $
  and
  $ u + v + \varepsilon (w) + \varepsilon(l) $ are identical. But, since $\varepsilon$
  is an infinitesimal extension, the second map can also be written as $
  (u + v) + \varepsilon (w + l)$, which is precisely $(u + v) \oplus (w + l)$.
  So we have
  \begin{align*}
     (u + v) \oplus (w + l) = (u \oplus w) + (v \oplus l) = (u + v) \oplus \dd[+](u,
  v, w, l)
  \end{align*}
  Applying \Fax{4} again gives $\dd[+] = + \circ \pi_1$ as desired. Axiom
  \CdCax{7} can be established in a similar way and we omit the calculations,
  proceeding directly to axiom \CdCax{6a} -- which is equivalent to, and easier
  to prove than, the more general \CdCax{6}. As before, we pick arbitrary $x, u
  : A \to B$ and calculate:
  \begin{align*}
    &f(x) \oplus \dd^2[f](x, 0, 0, u)
    \\
    &= f(x) + (0 \oplus \dd^2[f](x, 0, 0, u))
    &\text{(by (\ref{oplus-epsilon}))}
    \\
    &= f(x) + (0 \oplus (0 \oplus \dd^2[f](x, 0, 0, u))))
    &\text{(by \Fax{3})}
    \\
    &= f(x) + (0 \oplus (\dd[f](x, 0) \oplus \dd^2[f](x, 0, 0, u)))
    &\text{(by regularity)}
    \\
    &= f(x) + (0 \oplus \dd[f](x, 0 \oplus u))
    &\text{(by the derivative condition)}
    \\
    &= f(x) \oplus \dd[f](x, 0 \oplus u)
    &\text{(by \ref{oplus-epsilon})}
    \\
    &= f(x \oplus (0 \oplus u))
    &\text{(by the derivative condition)}
    \\
    &= f(x + 0 \oplus (0 \oplus u))
    &\text{(by \ref{oplus-epsilon})}
    \\
    &= f(x + 0 \oplus u)
    &\text{(by \Fax{3})}
    \\
    &= f(x \oplus u)
    &\text{(by \ref{oplus-epsilon})}
    \\
    &= f(x) \oplus \dd[f](x, u)
    &\text{(by the derivative condition)}
  \end{align*}
  Hence $\dd^2[f](x, 0, 0, u) = \dd[f](x, u)$ as desired.
  \hfill$\blacksquare $
\end{proof}

Note that the proof above goes through without appealing to \Fax{3}, except for
axiom \CdCax{6}, for which only a weaker version is provable. More precisely,
without \Fax{3}, one can only show
\[
  \varepsilon (\dd^2[f](x, 0, 0, u)) = \dd[f](x, \varepsilon(u))
\]
which is strictly weaker than \CdCax{6} and, in fact, holds trivially whenever
$\varepsilon = 0$.

\subsection{Linear Maps and $\varepsilon$-Linear Maps}

An important subclass of maps in a Cartesian differential category is the class
of \emph{linear maps} \cite[Definition 2.2.1]{blute2009cartesian}. One can also
define linear maps in a Cartesian difference category by using the same
definition. 

\begin{definition} \label{def:linearity} In a Cartesian difference category, a
map $f$ is \textbf{linear} if the following equality holds: $\dd[f] = f \circ
\pi_1$.
\end{definition}

Using element-like notation, a map $f$ is linear if $\dd[f](x,y) = f(y)$. Linear
maps in a Cartesian difference category satisfy many of the same properties
found in \cite[Lemma 2.2.2]{blute2009cartesian}.

\begin{lemma}\label{Lemma:linear} In a Cartesian difference category, 
\begin{enumerate}
    \item If $f: A \to B$ is linear then $\varepsilon(f) = f \circ \varepsilon(1_A)$; 
    \item If $f: A \to B$ is linear, then $f$ is additive (Definition \ref{LACdef});
    \item Identity maps, projection maps, and zero maps are linear; 
    \item The composite, sum, and pairing of linear maps is linear; 
    \item If $f: A \to B$ and $k: C \to D$ are linear, then for any map $g: B \to C$, the following equality holds: $\dd[k \circ g \circ f] = k \circ \dd[g] \circ (f \times f)$;
    \item If an isomorphism is linear, then its inverse is linear;
    \item For any object $A$, $\oplus_A$ and $+_A$ are linear. 
    \item Whenever $\varepsilon$ is nilpotent, then every derivative evaluated at
    zero is linear. That is, every map $\dd[f] \circ \pair{0}{1_A}$ is linear.
\end{enumerate}
\end{lemma}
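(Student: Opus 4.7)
The plan is to address the eight claims in order, following the template of Lemma 2.2.2 of \cite{blute2009cartesian} and paying attention only to where the infinitesimal extension $\varepsilon$ genuinely intervenes. For item 1, I apply \CdCax{0} with $x = 0$ and $y = 1_A$: the left-hand side collapses to $f \circ \varepsilon(1_A)$, while the right-hand side collapses to $\varepsilon(\dd[f] \circ \langle 0, 1_A \rangle) = \varepsilon(f)$ using the linearity hypothesis $\dd[f] = f \circ \pi_1$. For item 2, substituting $\dd[f] = f \circ \pi_1$ into both parts of \CdCax{2} yields $f \circ (y + z) = f \circ y + f \circ z$ and $f \circ 0 = 0$, which is the additivity of Definition \ref{LACdef}. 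Item 3 is a direct reading of \CdCax{3} together with the second equation of \CdCax{1}. Item 4 is then obtained by substituting into the chain rule \CdCax{5} for composition, into the first equation of \CdCax{1} for sum, and into \CdCax{4} for pairing.

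Items 5--7 all reduce to earlier parts. Item 5 amounts to two applications of the chain rule \CdCax{5} combined with the linearity hypotheses on $f$ and $k$: the outer chain rule eats $\dd[k] = k \circ \pi_1$, and the inner one turns $\dd[f]$ into $f \circ \pi_1$, leaving exactly $k \circ \dd[g] \circ (f \times f)$. Item 6 is obtained by differentiating the identity $f \circ f^{-1} = 1_B$ with the chain rule to get $\pi_1 = f \circ \dd[f^{-1}]$, then pre-composing with $f^{-1}$. For item 7, linearity of $+_A = \pi_0 + \pi_1$ is immediate from items 3 and 4, while for $\oplus_A = \pi_0 + \varepsilon(\pi_1)$ it suffices to show that $\varepsilon(\pi_1)$ is linear: using the third equation of \CdCax{1}, axiom \CdCax{3}, and the defining equation $\varepsilon(g \circ f) = \varepsilon(g) \circ f$, one computes $\dd[\varepsilon(\pi_1)] = \varepsilon(\dd[\pi_1]) = \varepsilon(\pi_1 \circ \pi_1) = \varepsilon(\pi_1) \circ \pi_1$, and item 4 concludes.

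The main obstacle is item 8, the only claim specific to the Cartesian difference setting and the only place where nilpotency must be invoked. The strategy is to apply the chain rule \CdCax{5} together with \CdCax{3} and \CdCax{4} to obtain
\[
  \dd[\dd[f] \circ \langle 0, 1_A \rangle] = \dd^2[f] \circ \left\langle \langle 0, \pi_0 \rangle, \langle 0, \pi_1 \rangle \right\rangle,
\]
and then use \CdCax{7.a} (swap of inner entries) to rewrite this as $\dd^2[f] \circ \langle \langle 0, 0 \rangle, \langle \pi_0, \pi_1 \rangle \rangle$. Since $\varepsilon$ is nilpotent, the corollary immediately preceding this lemma, applied to $\dd[f]$, ensures that $\dd^2[f]$ is additive in its second argument, so this expression splits as the sum of $\dd^2[f] \circ \langle \langle 0, 0 \rangle, \langle \pi_0, 0 \rangle \rangle$ and $\dd^2[f] \circ \langle \langle 0, 0 \rangle, \langle 0, \pi_1 \rangle \rangle$. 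The second summand reduces to $\dd[f] \circ \langle 0, \pi_1 \rangle$ by \CdCax{6.a}; the first is rewritten using \CdCax{7} as $\dd^2[f] \circ \langle \langle 0, \pi_0 \rangle, \langle 0, 0 \rangle \rangle$ and then vanishes by the zero clause of \CdCax{2}. The final result $\dd[f] \circ \langle 0, \pi_1 \rangle$ equals $(\dd[f] \circ \langle 0, 1_A \rangle) \circ \pi_1$, which is the linearity condition.
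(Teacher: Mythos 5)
Your proof is correct, and for items 1--7 it follows essentially the route the paper intends (the paper only writes out items 2 and 8, deferring the rest to the analogous lemma for Cartesian differential categories); your item 2 is literally the paper's argument in combinator rather than element notation, and your treatment of $\oplus_A$ via the linearity of $\varepsilon(\pi_1)$ is the right difference-specific addendum for item 7. (One pedantic remark on item 1: the right-hand side of \CdCax{0} at $x=0$ is $f \circ 0 + \varepsilon(\dd[f]\circ\pair{0}{1_A})$, so you are implicitly using $f \circ 0 = 0$, which is supplied by item 2.) Where you genuinely diverge is item 8. The paper applies \CdCax{6} directly to $\dd^2[f]\circ\pair{\pair{0}{\pi_0}}{\pair{0}{\pi_1}}$ to get $\dd[f]\circ\pair{\varepsilon(\pi_0)}{\pi_1}$, then uses \CdCax{0} to peel off an error term $\varepsilon\bigl(\dd^2[f]\circ\pair{\pair{0}{\pi_1}}{\pair{\pi_0}{0}}\bigr)$ and kills it by iterating Lemma \ref{lem:d-epsilon}.iii under nilpotency. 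You instead swap coordinates with \CdCax{7.a}, split $\pair{\pi_0}{\pi_1} = \pair{\pi_0}{0} + \pair{0}{\pi_1}$ using the Corollary on additivity of second derivatives, and dispatch the two summands with \CdCax{6.a} and with \CdCax{7} plus the zero clause of \CdCax{2}. Both arguments are valid and both ultimately consume nilpotency through Lemma \ref{lem:d-epsilon}.iii (yours indirectly, since the Corollary is proved by iterating it); your version is arguably cleaner in that it reuses the already-established Corollary rather than redoing the $\varepsilon^k$ absorption inline, at the cost of an extra coordinate shuffle via \CdCax{7.a} and \CdCax{7}.
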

\begin{proof}
  Most of these properties are straightforward consequences of the definition of
  linearity. We explicitly prove 2 and 8.

  For 2, suppose $f$ is linear. Then:
  \begin{align*}
    f(x + y) = \dd[f](0, x + y)
    = \dd[f](0, x) + \dd[f](0 + \varepsilon(x), y)
    = f(x) + f(y)
  \end{align*}

  For 8, suppose $\varepsilon^k (f) = 0$ for all $f$. Then:
  \begin{align*}
    \dd[\dd[f] \circ \pair{0}{1_A}]
    &= \dd[\dd[f]] \circ \pair{\pair{0}{\pi_0}}{\pair{0}{\pi_1}}
    &\text{(by \CdCax{})}
    \\
    &= \dd[f] \circ \pair{0 + \varepsilon(\pi_0)}{\pi_1}
    &\text{(by \CdCax{6})}
    \\
    &= \big(\dd[f] \circ \pair{0}{\pi_1}\big)\\
    &\quad + \varepsilon\big(\dd[\dd[f]] \circ \pair{\pair{0}{\pi_1}}{\pair{\pi_0}{0}}\big)
    &\text{(by \CdCax{0})}
    \\
    &= \big(\dd[f] \circ \pair{0}{\pi_1}\big)\\
    &\quad + \varepsilon^k\big(\dd[\dd[f]] \circ \pair{\pair{0}{\pi_1}}{\pair{\pi_0}{0}}\big)
    &\text{(iterating Lemma \ref{lem:d-epsilon}.iii.)}
    \\
    &= \big(\dd[f] \circ \pair{0}{\pi_1}\big)
    &\text{(since $\varepsilon$ is nilpotent)}
  \end{align*}
  All the other properties follow much the same proofs in 
  \cite[Lemma 2.2.2]{blute2009cartesian}.
  \hfill $\blacksquare$
\end{proof}

Using element-like notation, the first point of the above lemma says that if $f$ is linear then $f(\varepsilon(x)) = \varepsilon(f(x))$. And while all linear maps are additive, the converse is not necessarily true, see \cite[Corollary 2.3.4]{blute2009cartesian}. However, an immediate consequence of the above lemma is that the subcategory of linear maps of a Cartesian difference category has finite biproducts. 

Another interesting subclass of maps in a Cartesian difference category are the $\varepsilon$-linear maps, which are maps whose infinitesimal extension is linear. 

\begin{definition} In a Cartesian difference category, a map $f$ is \textbf{$\varepsilon$-linear} if $\varepsilon(f)$ is linear.
\end{definition}

\begin{lemma} In a Cartesian difference category, 
\begin{enumerate}
\item If $f: A \to B$ is $\varepsilon$-linear then $f \circ (x + \varepsilon(y)) = f \circ x + \varepsilon(f) \circ y$; 
    \item Every linear map is $\varepsilon$-linear;
    \item The composite, sum, and pairing of $\varepsilon$-linear maps is $\varepsilon$-linear; 
    \item If an isomorphism is $\varepsilon$-linear, then its inverse is again $\varepsilon$-linear. 
\end{enumerate}
\end{lemma}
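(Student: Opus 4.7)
The plan is to handle each of the four parts in turn, keeping the easy cases separate from the one genuine difficulty (the composition case of part 3).

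For part 1, I would start from \CdCax{0} applied to $f$, which yields $f \circ (x + \varepsilon(y)) = f \circ x + \varepsilon(\dd[f] \circ \langle x, y \rangle)$. It then suffices to show that $\varepsilon(\dd[f] \circ \langle x, y \rangle)$ equals $\varepsilon(f) \circ y$. Using that $\varepsilon$ commutes with pre-composition, this becomes $\varepsilon(\dd[f]) \circ \langle x, y \rangle$, and by the third clause of \CdCax{1} this equals $\dd[\varepsilon(f)] \circ \langle x, y \rangle$. Finally, since $\varepsilon(f)$ is linear by hypothesis, $\dd[\varepsilon(f)] = \varepsilon(f) \circ \pi_1$, and post-composing with $\langle x, y \rangle$ gives $\varepsilon(f) \circ y$. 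Part 2 is equally direct: if $f$ is linear then $\dd[f] = f \circ \pi_1$, so $\dd[\varepsilon(f)] = \varepsilon(\dd[f]) = \varepsilon(f \circ \pi_1) = \varepsilon(f) \circ \pi_1$, whence $\varepsilon(f)$ is linear.

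For the sum and pairing cases of part 3, I would use that $\varepsilon$ is a monoid morphism on hom-sets, giving $\varepsilon(f+g) = \varepsilon(f) + \varepsilon(g)$, which is a sum of linear maps and hence linear by Lemma \ref{Lemma:linear}.4. For pairings I would first establish $\varepsilon(\langle f, g \rangle) = \langle \varepsilon(f), \varepsilon(g) \rangle$ by post-composing with $\pi_i$ and using $\varepsilon(\pi_i) = \pi_i \circ \varepsilon(1)$ together with $\varepsilon$ commuting with pre-composition; then Lemma \ref{Lemma:linear}.4 again gives linearity of the pairing of linear maps.

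For the composition case, I would write $\varepsilon(g \circ f) = \varepsilon(g) \circ f$ and compute $\dd[\varepsilon(g) \circ f]$ via the chain rule \CdCax{5}, obtaining $\dd[\varepsilon(g)] \circ \langle f \circ \pi_0, \dd[f] \rangle$. Linearity of $\varepsilon(g)$ collapses this to $\varepsilon(g) \circ \dd[f]$. The goal then reduces to showing $\varepsilon(g) \circ \dd[f] = \varepsilon(g) \circ f \circ \pi_1$. The $\varepsilon$-linearity of $f$ supplies $\varepsilon(\dd[f]) = \varepsilon(f) \circ \pi_1$, meaning $\dd[f]$ and $f \circ \pi_1$ agree after an infinitesimal extension. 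To bridge the gap, I would exploit Lemma \ref{Lemma:linear}.1 applied to the linear map $\varepsilon(g)$, which gives the intertwining $\varepsilon(g) \circ \varepsilon(1_B) = \varepsilon(1_C) \circ \varepsilon(g)$, together with additivity of $\varepsilon(g)$. This last step, transporting a "modulo $\varepsilon$" equality through an outer $\varepsilon(g)$ without assuming cancellativity, is the main obstacle I expect.

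For part 4, given an iso $f$ that is $\varepsilon$-linear with inverse $f^{-1}$, I would differentiate the identity $f \circ f^{-1} = 1_B$ using \CdCax{5} and \CdCax{3} to obtain $\dd[f] \circ \langle f^{-1} \circ \pi_0, \dd[f^{-1}] \rangle = \pi_1$. Applying $\varepsilon$ and using $\varepsilon$-linearity of $f$ to rewrite $\varepsilon(\dd[f]) = \varepsilon(f) \circ \pi_1$, I would deduce $\varepsilon(f) \circ \dd[f^{-1}] = \varepsilon(\pi_1)$. Precomposing with $f^{-1}$ on the left and massaging through Lemma \ref{Lemma:linear}.1 for the linear map $\varepsilon(f)$ should give $\varepsilon(\dd[f^{-1}]) = \varepsilon(f^{-1}) \circ \pi_1$, i.e.\ linearity of $\varepsilon(f^{-1})$; this again depends on the same intertwining trick from part 3, which is the recurring subtle point.
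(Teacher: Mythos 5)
The paper itself states this lemma without a written proof, so there is no official argument to compare against; judged on its own terms, your treatment of parts 1 and 2 and of the sum and pairing clauses of part 3 is correct and complete. The genuine gap is exactly where you flag it: the composition clause of part 3 and all of part 4. Reducing the composite case to $\varepsilon(g) \circ \dd[f] = \varepsilon(g) \circ f \circ \pi_1$ is right, but the intertwining identity $\varepsilon(g) \circ \varepsilon(1_B) = \varepsilon(1_C) \circ \varepsilon(g)$ only yields $\varepsilon( \varepsilon(g) \circ \dd[f] ) = \varepsilon( \varepsilon(g) \circ f \circ \pi_1 )$, i.e.\ the target equation with one extra outer $\varepsilon$; since addition need not be cancellative and $\varepsilon$ need not be injective, that $\varepsilon$ cannot be stripped off. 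Every variant of your strategy (including differentiating $f \circ f^{-1} = 1$ in part 4 and ``massaging'') lands on an equation of the form $h + a = h + b$ or $\varepsilon(a) = \varepsilon(b)$ from which $a = b$ does not follow. So as written these two clauses are not proved.

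The missing ingredient is Lemma \ref{lem:d-epsilon}.i, which holds for \emph{any} map and gives $\dd[g] \circ \pair{x}{\varepsilon(u)} = \varepsilon(\dd[g]) \circ \pair{x}{u}$. When $g$ is $\varepsilon$-linear the right-hand side equals $\varepsilon(g) \circ u$, so $\varepsilon(g) \circ u = \dd[g] \circ \pair{x}{\varepsilon(u)}$ depends on $u$ only through $\varepsilon(u)$. Since $\varepsilon$-linearity of $f$ says precisely $\varepsilon(\dd[f]) = \varepsilon(f \circ \pi_1)$, one gets
\[
\varepsilon(g) \circ \dd[f] = \dd[g] \circ \pair{x}{\varepsilon(\dd[f])} = \dd[g] \circ \pair{x}{\varepsilon(f \circ \pi_1)} = \varepsilon(g) \circ f \circ \pi_1,
\]
which closes the composition case. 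For part 4, apply the same observation to $f^{-1}$ in the form $\varepsilon(\dd[f^{-1}]) \circ \pair{x}{w} = \dd[f^{-1}] \circ \pair{x}{\varepsilon(w)}$, which depends only on $\varepsilon(w)$; substituting $w = \dd[f]$ and $w = f \circ \pi_1$ (which have equal image under $\varepsilon$ by $\varepsilon$-linearity of $f$) gives $\varepsilon(\dd[f^{-1}]) \circ \pair{f \circ \pi_0}{\dd[f]} = \varepsilon(\dd[f^{-1}]) \circ (f \times f)$. The left side equals $\varepsilon(\dd[f^{-1} \circ f]) = \varepsilon(\pi_1)$ by the chain rule, and cancelling the isomorphism $f \times f$ yields $\varepsilon(\dd[f^{-1}]) = \varepsilon(f^{-1}) \circ \pi_1$, i.e.\ linearity of $\varepsilon(f^{-1})$. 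So the lemma is true, but the argument has to route through Lemma \ref{lem:d-epsilon}.i rather than through the intertwining trick you propose.
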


Using element-like notation, the first point of the above lemma says that if $f$ is $\varepsilon$-linear then $f(x + \varepsilon(y)) = f(x) + \varepsilon(f(y))$. So $\varepsilon$-linear maps are additive on ``infinitesimal'' elements (i.e. those of the form $\varepsilon(y)$). 

For a Cartesian differential category, linear maps in the Cartesian difference category sense are precisely the same as Cartesian differential category sense \cite[Definition 2.2.1]{blute2009cartesian}, while every map is $\varepsilon$-linear since $\varepsilon =0$. 

\section{Examples of Cartesian Difference Categories}\label{EXsec}

\subsection{Smooth Functions}\label{smoothex}

Every Cartesian differential category is a Cartesian difference category where the infinitesimal extension is zero. As a particular example, we consider the category of real smooth functions, which as mentioned above, can be considered to be the canonical (and motivating) example of a Cartesian differential category.

Let $\mathbb{R}$ be the set of real numbers and let $\mathsf{SMOOTH}$ be the category whose objects are Euclidean spaces $\mathbb{R}^n$ (including the point $\mathbb{R}^0 = \lbrace \ast \rbrace$), and whose maps are smooth functions $F: \mathbb{R}^n \to \mathbb{R}^m$. $\mathsf{SMOOTH}$ is a Cartesian left additive category where the product structure is given by the standard Cartesian product of Euclidean spaces and where the additive structure is defined by point-wise addition, $(F+G)(\vec x) = F(\vec x) + G(\vec x)$ and $0(\vec x) = (0, \hdots, 0)$, where $\vec x \in \mathbb{R}^n$. $\mathsf{SMOOTH}$ is a Cartesian differential category where the differential combinator is defined by the directional derivative of smooth functions. Explicitly, for a smooth function $F: \mathbb{R}^n \to \mathbb{R}^m$, which is in fact a tuple of smooth functions $F= (f_1, \hdots, f_n)$ where $f_i: \mathbb{R}^n \to \mathbb{R}$, $\mathsf{D}[F]: \mathbb{R}^n \times \mathbb{R}^n \to \mathbb{R}^m$ is defined as follows: 
\[\mathsf{D}[F]\left(\vec x, \vec y \right) := \left( \sum \limits^n_{i=1} \frac{\partial f_1}{\partial u_i}(\vec x) y_i, \hdots, \sum \limits^n_{i=1} \frac{\partial f_n}{\partial u_i}(\vec x) y_i  \right)\]
where $\vec x = (x_1, \hdots, x_n), \vec y = (y_1, \hdots, y_n) \in \mathbb{R}^n$. Alternatively, $\mathsf{D}[F]$ can also be defined in terms of the Jacobian matrix of $F$. Therefore $\mathsf{SMOOTH}$ is a Cartesian difference category with infinitesimal extesion $\varepsilon =0$ and with difference combinator $\mathsf{D}$. Since $\varepsilon = 0$, the induced action is simply $\vec x \oplus_{\mathbb{R}^n} \vec y = \vec x$. Also a smooth function is linear in the Cartesian difference category sense precisely if it is $\mathbb{R}$-linear in the classical sense, and every smooth function is $\varepsilon$-linear. 

\subsection{Calculus of Finite Differences}\label{discreteex}

Here we explain how the difference operator from the calculus of finite differences gives an example of a Cartesian difference category but \emph{not} a Cartesian differential category. This example was the main motivating example for developing Cartesian difference categories. The calculus of finite differences is captured by the category of abelian groups and arbitrary set functions between them. 

Let $\overline{\mathsf{Ab}}$ be the category whose objects are abelian groups $G$ (where we use additive notation for group structure) and where a map $f: G \to H$ is simply an arbitrary function between them (and therefore does not necessarily preserve the group structure). $\overline{\mathsf{Ab}}$ is a Cartesian left additive category where the product structure is given by the standard Cartesian product of sets and where the additive structure is again given by point-wise addition, $(f+g)(x)=f(x) + g(x)$ and $0(x)=0$. $\overline{\mathsf{Ab}}$ is a Cartesian difference category where the infinitesimal extension is simply given by the identity, that is, $\varepsilon(f)=f$, and and where the difference combinator $\dd$ is defined as follows for a map $f: G \to H$: 
 \[\dd[f](x,y) = f(x + y) - f(x)\]
On the other hand, $\dd$ is not a differential combinator for $\overline{\mathsf{Ab}}$ since it does not satisfy {\bf [CD.6]} and part of {\bf [CD.2]}. 
Thanks to the addition of the infinitesimal extension, $\dd$ does satisfy {\bf [C$\dd$.2]} and {\bf [C$\dd$.6]}, as well as {\bf [C$\dd$.0]}. 
However, as noted in \cite{FMCS2018}, it is interesting to note that this $\dd$ does satisfy {\bf [CD.1]}, the second part of {\bf [CD.2]}, {\bf [CD.3]}, {\bf [CD.4]}, {\bf [CD.5]}, {\bf [CD.7]}, and {\bf [CD.6.a]}. It is worth noting that in \cite{FMCS2018}, the goal was to drop the addition and develop a ``non-additive'' version of Cartesian differential categories. 

In $\overline{\mathsf{Ab}}$, since the infinitesimal operator is given by the identity, the induced action is simply the addition, $x \oplus_G y = x + y$. On the other hand, the linear maps in $\overline{\mathsf{Ab}}$ are precisely the group homomorphisms. Indeed, $f$ is linear if $\dd[f](x,y) = f(y)$. But by {\bf [C$\dd$.0]} and {\bf [C$\dd$.2]}, we get that: 
\[f(x + y) = f(x) + \dd[f](x,y)= f(x) + f(y) \quad \quad \quad f(0) =  \dd[f](x,0) = 0 \]
So $f$ is a group homomorphism. Conversely, if $f$ is a group homomorphism: 
\[\dd[f](x,y) = f(x+y) - f(x) = f(x) + f(y) - f(x) = f(y)\]
So $f$ is linear. Since $\varepsilon(f)=f$, the $\varepsilon$-linear maps are precisely the linear maps. 

\subsection{Module Morphisms}\label{moduleex}

Here we provide a simple example of a Cartesian difference category whose difference combinator is also a differential combinator, but where the infinitesimal extension is neither zero nor the identity. 

Let $R$ be a commutative semiring and let $\mathsf{MOD}_R$ be the category of $R$-modules and $R$-linear maps between them. $\mathsf{MOD}_R$ has finite biproducts and is therefore a Cartesian left additive category where every map is additive. Every $r \in R$ induces an infinitesimal extension $\varepsilon^r$ defined by scalar multiplication, $\varepsilon^r(f)(m) = r f(m)$. Then $\mathsf{MOD}_R$ is a Cartesian difference category with the infinitesimal extension $\varepsilon^r$ for any $r \in R$ and difference combinator $\dd$ defined as:
\[\dd[f](m,n)=f(n)\]
$R$-linearity of $f$ assures that {\bf [C$\dd$.0]} holds, while the remaining Cartesian difference axioms hold trivially. In fact, $\dd$ is also a differential combinator and therefore $\mathsf{MOD}_R$ is also a Cartesian differential category. The induced action is given by $m \oplus_M n = m + rn$. By definition of $\dd$, every map in $\mathsf{MOD}_R$ is linear, and by definition of $\varepsilon^r$ and $R$-linearity, every map is also $\varepsilon$-linear.  

\newcommand{\seq}[1]{\left[ {#1} \right]}
\newcommand{\Ab}[0]{\overline{\mathsf{Ab}}}
\newcommand{\z}[0]{\mathbf{z}}
\subsection{Stream calculus}\label{streamex}
Here we show how one can extend the calculus of finite differences example to stream calculus. 

For a set $A$, let $A^\omega$ denote the set of infinite sequences of elements of $A$, where we write $\seq{a_i}$ for the infinite sequence $\seq{a_i} = (a_1, a_2, a_3, \hdots)$ and $a_{i:j}$ for the (finite) subsequence $(a_i, a_{i + 1}, \hdots, a_j)$. A function $f : A^\omega \to B^\omega$ is \textbf{causal} whenever the
  $n$-th element $f\left(\seq{a_i}\right)_n$ of the output sequence only depends on the first $n$ elements of $\seq{a_i}$, that is, $f$ is causal if and only if whenever $a_{0:n} = b_{0:n}$ 
  then $f\left(\seq{a_i}\right)_{0:n} = f\left(\seq{b_i}\right)_{0:n}$. We now consider streams over abelian groups, so let $\Ab^\omega$ be the category whose objects are all the Abelian groups and whose morphisms are causal maps from $G^\omega$ to $H^\omega$. $\Ab^\omega$ is a Cartesian left-additive category, where the product is given by the standard product of abelian groups and where the additive structure is lifted point-wise from the structure of $\Ab$, that is, $(f+g)\left(\seq{a_i}\right)_n = f\left(\seq{a_i}\right)_n + g\left(\seq{a_i}\right)_n$ and $0\left(\seq{a_i}\right)_n = 0$. In order to define the infinitesimal extension, we first need to define the truncation operator $\z$. So let $G$ be an abelian group and $\seq{a_i} \in G^\omega$, then define the sequence $\z (\seq{a_i})$ as:
    \begin{align*}
   \z (\seq{a_i})_0 = 0  && 
\z \left(\seq{a_i}\right)_{n+1} = a_{n+1}
  \end{align*}
   The category $\Ab^\omega$ is a Cartesian difference category where the infinitesimal extension is given by the truncation operator, $\varepsilon(f)\left(\seq{a_i}\right) = \z\left( f\left(\seq{a_i}\right) \right)$, 
  and where the difference combinator $\dd$ is defined as follows: 
    \begin{align*}
           \dd[f]\left(\seq{a_i}, \seq{b_i}\right)_0 &= f\left(\seq{a_i} + \seq{b_i}\right)_0 -  f\left(\seq{a_i}\right)_0 \\
                \dd[f]\left(\seq{a_i}, \seq{b_i}\right)_{n + 1} &= f\left(\seq{a_i} + \z(\seq{b_i}) \right)_{n + 1} -  f\left(\seq{a_i}\right)_{n+1}
    \end{align*}
Note the similarities between the difference combinator on $\Ab$ and that on $\Ab^\omega$. The induced action is computed out to be: 
  \begin{align*}
   ( \seq{a_i} \oplus \seq{b_i})_0 = a_0 && ( \seq{a_i} \oplus \seq{b_i})_{n+1} = a_{n+1} + b_{n+1}   \end{align*}
A causal map is linear (in the Cartesian difference category sense) if and only if it is a group homomorphism. While a causal map $f$ is $\varepsilon$-linear if and only if it is a group homomorphism which does not the depend on the $0$-th term of its input, that is, $f\left(\seq{a_i}\right) = f\left( \z (\seq{a_i}) \right)$. 

\newcommand{\T}[0]{\mathsf{T}}
\section{Tangent Bundles in Cartesian Difference Categories} \label{monadsec}

In this section we show that the difference combinator of a Cartesian difference category induces a monad, called the \emph{tangent monad}, whose Kleisli category is again a Cartesian difference category. This construction is a generalization of the tangent monad for Cartesian differential categories \cite{cockett2014differential,manzyuk2012tangent}. 

\subsection{The Tangent Bundle Monad}

Let $\mathbb{X}$ be a Cartesian difference category with infinitesimal extension $\varepsilon$ and difference combinator $\dd$. Define the functor $\mathsf{T}: \mathbb{X} \to \mathbb{X}$ as follows: 
\[ \mathsf{T}(A) = A \times A \quad \quad \quad \mathsf{T}(f) = \langle f \circ \pi_0, \dd[f] \rangle \]
and define the natural transformations $\eta: \mathsf{1}_\mathbb{X} \Rightarrow \mathsf{T}$ and $\mu: \mathsf{T}^2 \Rightarrow \mathsf{T}$ as follows: 
\[\eta_A := \langle 1_A, 0 \rangle \quad \quad \quad \mu_A := \left \langle \pi_0\circ \pi_0, \pi_1 \circ \pi_0 + \pi_0 \circ \pi_1 + \varepsilon(\pi_1 \circ \pi_1) \right \rangle \]


\begin{proposition} $(\mathsf{T}, \mu, \eta)$ is a monad. 
\end{proposition}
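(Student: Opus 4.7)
The plan is to verify all the requirements for $(\mathsf{T}, \mu, \eta)$ to be a monad: functoriality of $\mathsf{T}$, naturality of $\eta$ and $\mu$, and the left/right unit and associativity laws. Throughout I work in element-like notation, so that $\mathsf{T}(f)(x, y) = (f(x), \dd[f](x, y))$, $\eta_A(x) = (x, 0)$, and $\mu_A((a, b), (c, d)) = (a, b + c + \varepsilon(d))$.

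Functoriality of $\mathsf{T}$ is essentially a repackaging of the identity and chain rules: $\mathsf{T}(1_A) = 1_{A \times A}$ reduces to $\dd[1_A] = \pi_1$, which is \CdCax{3}, while $\mathsf{T}(g \circ f) = \mathsf{T}(g) \circ \mathsf{T}(f)$ is exactly \CdCax{5}. Naturality of $\eta$ reduces to $\dd[f] \circ \pair{x}{0} = 0$, which is the second clause of \CdCax{2}. Both unit laws also unfold in a single line each: $\mu_A \circ \eta_{\mathsf{T}A}$ sends $(x, y) \mapsto ((x, y), (0, 0)) \mapsto (x, y + 0 + \varepsilon(0)) = (x, y)$, while $\mathsf{T}(\eta_A)$, computed by \CdCax{4}, \CdCax{3}, and \CdCax{1}, sends $(x, y) \mapsto ((x, 0), (y, 0))$, which $\mu_A$ then maps to $(x, y)$.

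For associativity, the key observation is that $\mu_A$ is itself linear in the sense of Definition \ref{def:linearity}. Each summand of $\mu_A$ is either a projection (linear by Lemma \ref{Lemma:linear}) or the $\varepsilon$-extension of one, and by \CdCax{1} together with $\varepsilon(h \circ g) = \varepsilon(h) \circ g$, linearity is preserved by $\varepsilon$; summing and pairing linear maps then yields linear maps. Hence $\dd[\mu_A] = \mu_A \circ \pi_1$, so $\mathsf{T}(\mu_A)(U, V) = (\mu_A(U), \mu_A(V))$. A direct calculation then shows that both $\mu_A \circ \mu_{\mathsf{T}A}$ and $\mu_A \circ \mathsf{T}(\mu_A)$, evaluated on $(((a, b), (c, d)), ((e, f), (g, h)))$, reduce to $(a, b + c + e + \varepsilon(d) + \varepsilon(f) + \varepsilon(g) + \varepsilon^2(h))$, using only commutativity and associativity of $+$ together with $\varepsilon(u + v) = \varepsilon(u) + \varepsilon(v)$.

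The main obstacle is naturality of $\mu$, which is the only step genuinely invoking the second-order axioms. Unfolding $\mu_B \circ \mathsf{T}^2(f) = \mathsf{T}(f) \circ \mu_A$ reduces the question to the identity
\[
\dd[f](x, y + z + \varepsilon(w)) = \dd[f](x, y) + \dd[f](x, z) + \varepsilon\!\left(\dd^2[f](\pair{x}{y}, \pair{z}{w})\right).
\]
My strategy is to expand the left-hand side with two applications of \CdCax{2}, yielding $\dd[f](x, y) + \dd[f](x + \varepsilon(y), z) + \dd[f](x + \varepsilon(y + z), \varepsilon(w))$; then rewrite the middle summand via \CdCax{0} applied to $\dd[f]$ (with perturbation $(y, 0)$ on the first argument) as $\dd[f](x, z) + \varepsilon(\dd^2[f](\pair{x}{z}, \pair{y}{0}))$ and swap to $\varepsilon(\dd^2[f](\pair{x}{y}, \pair{z}{0}))$ using \CdCax{7}. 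For the last summand, Lemma \ref{lem:d-epsilon}.i combined with \CdCax{6} rewrites it as $\varepsilon(\dd^2[f](\pair{x + \varepsilon(z)}{y}, \pair{0}{w}))$. A final application of \CdCax{2} to $\dd[f]$ itself, splitting the perturbation as $(z, w) = (z, 0) + (0, w)$, recombines these two $\varepsilon$-terms into the desired $\varepsilon(\dd^2[f](\pair{x}{y}, \pair{z}{w}))$, closing the argument.
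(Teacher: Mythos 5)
Your proof is correct and follows essentially the same route as the paper: unit laws by direct computation, and associativity via the observation that $\mu_A$ is linear so that $\mathsf{T}(\mu_A) = \mu_A \times \mu_A$. The one place you go beyond the paper is the naturality of $\mu$, which the paper asserts without calculation; your reduction to the identity $\dd[f](x, y + z + \varepsilon(w)) = \dd[f](x,y) + \dd[f](x,z) + \varepsilon(\dd^2[f](\pair{x}{y},\pair{z}{w}))$ and its verification via {\bf [C$\dd$.0]}, {\bf [C$\dd$.2]}, {\bf [C$\dd$.6]}, {\bf [C$\dd$.7]} and Lemma \ref{lem:d-epsilon}.i checks out.
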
 
\begin{proof}
  \renewcommand{\T}[0]{\mathsf{T}}
  \newcommand{\Id}[1]{1_{#1}}
  \newcommand{\pa}[1]{\left({#1}\right)}
  Functoriality of $\T$ follow immediately from \textbf{[C$\dd$.3]} and the
  chain rule \textbf{[C$\dd$.5]}. Naturality of $\eta$ and $\mu$ and the monad
  identities follow from the remaining difference combinator axioms. 
  It remains to check that
  the monad laws hold, which is a matter of simple calculation.
  \begin{align*}
    \mu_A \circ \eta_{\T(A)}
    &= \mu_A \circ \pair{\Id{\T(A)}}{0}\\
    &= \pair{\pi_{00}}{\pi_{10} + \pi_{01} + \varepsilon(\pi_{11})}
    \circ \pair{\Id{\T(A)}}{0}
    \\
    &= \pair{\pi_0}{\pi_1 + 0 + \varepsilon(0)}
    \\
    &= \Id{\T(A)}
  \end{align*}\vspace{-1.5\baselineskip}
  \begin{align*}
    \mu_A \circ \T(\eta_A)
    &= \mu_A \circ \four{\pi_0}{0}{\pi_1}{0}\\
    &= \pair{\pi_{00}}{\pi_{10} + \pi_{01} + \varepsilon(\pi_{11})}
    \circ \four{\pi_0}{0}{\pi_1}{0}
    \\
    &= \pair{\pi_0}{0 + \pi_1 + \varepsilon(0)}\\
    &= \Id{\T(A)}
  \end{align*}
  where $\pi_{ij} = \pi_i \circ \pi_j$. For the last of the monad laws, we first note that, since $\mu$ is linear,
  it follows that $\T(\mu) = \mu \times \mu$. Then it suffices to compute:
  \begin{align*}
    \mu_A \circ \T(\mu_A)
    &= \mu_A \circ (\mu_A \times \mu_A)
    \\
    &= \pair{\pi_{00}}{\pi_{10} + \pi_{01} + \varepsilon(\pi_{11})}
    \circ (\mu_A \times \mu_A)
    \\
    &=
    \pair{\pi_0 \circ \mu_A \circ \pi_1}{\pi_1 \circ \mu_A \circ \pi_1
    + \pi_0 \circ \mu_A \circ \pi_1 + \varepsilon(\pi_1 \circ \mu_A \circ \pi_1)}
    \\
    &=
    \pair{\pi_{000}}{
      \pi_{100} + \pi_{010} + \varepsilon(\pi_{001})
      + \pi_{001}
      + \varepsilon \pa{\pi_{010} + \pi_{011} + \varepsilon(\pi_{111})}
    }
    \\
    &= 
    \pair{\pi_{00}}{\pi_{10} + \pi_{01} + \varepsilon(\pi_{11}) }
    \circ \pair{\pi_{00}}{\pi_{10} + \pi_{01} + \varepsilon(\pi_{11}) }
    \\
    &= \mu_A \circ \mu_{\T(A)}
  \end{align*}
  So we conclude that $(T. \mu, \eta)$ is a monad.   \hfill $\blacksquare$
\end{proof} 

When $\mathbb{X}$ is a Cartesian differential category with the difference structure arising from setting $\varepsilon = 0$, this tangent bundle monad coincides with the standard tangent monad corresponding to its tangent category structure \cite{cockett2014differential,manzyuk2012tangent}.

\subsection{The Kleisli Category of $\mathsf{T}$}
In this section we show that the Kleisli category of the tangent monad is a Cartesian difference category\footnote{The construction found here is different from the one found in the conference paper \cite{alvarez2020cartesian}. Indeed, the proposed difference combinator in \cite{alvarez2020cartesian} was based on the one that appeared in \cite{alvarez2019change}. Unfortunately, we have found that said proposed difference combinator fails to satisfy {\bf [C$\dd$.2]} and therefore both of the aforementioned results in \cite{alvarez2020cartesian,alvarez2019change} are incorrect.}.
Recall that the Kleisli category of the monad $(\mathsf{T}, \mu, \eta)$ is defined as the category $\mathbb{X}_\mathsf{T}$ whose objects are the objects of $\mathbb{X}$, and where a map $A \to B$ in $\mathbb{X}_\mathsf{T}$ is a map $f: A \to \mathsf{T}(B)$ in $\mathbb{X}$, which would be a pair $f = \langle f_0, f_1 \rangle$ where $f_j: A \to B$. The identity map in $\mathbb{X}_\mathsf{T}$ is the monad unit $\eta_A: A \to \mathsf{T}(A)$, while composition of Kleisli maps $f: A \to \mathsf{T}(B)$ and $g: B \to \mathsf{T}(C)$ is defined as the composite $\mu_C \circ \mathsf{T}(g) \circ f$. To distinguish between composition in $\mathbb{X}$ and $\mathbb{X}_\mathsf{T}$, we denote the Kleisli composition as follows: 
\[ g \circ^\T f = \mu_C \circ \mathsf{T}(g) \circ f \]
If $f=\langle f_0, f_1 \rangle$ and $g=\langle g_0, g_1 \rangle$, then their Kleisli composition can be worked out to be: 
\[ g \circ^\T f = \langle g_0, g_1 \rangle \circ^\T \langle f_0, f_1 \rangle = \left \langle g_0 \circ f_0, \dd[g_0] \circ \langle f_0, f_1 \rangle + g_1 \circ (f_0 \oplus f_1) \right \rangle \]
Kleisli maps can be understood as ``generalized'' vector fields. Indeed, $\T(A)$ should be thought of as the tangent bundle over $A$, and therefore a vector field would be a map $\langle 1, f \rangle: A \to \T(A)$, which is of course also a Kleisli map. For more details on the intuition behind this Kleisli category see \cite{cockett2014differential}. 

We now wish to explain how the Kleisli category is again a Cartesian difference category. We begin by exhibiting the Cartesian left additive structure of the Kleisli category. First note that $\T(A \times B) \cong \T(A) \times \T(B)$ via the canonical natural isomorphism $\phi = \four{\pi_{00}}{\pi_{01}}{\pi_{10}}{\pi_{11}}$ (where recall $\pi_{ij} = \pi_i \circ \pi_j$). As such, the product of objects in $\mathbb{X}_\mathsf{T}$ is defined as $A \times B$ with projections $\pi^{\mathsf{T}}_0: A \times B \to \mathsf{T}(A)$ and $\pi^{\mathsf{T}}_1: A \times B \to \mathsf{T}(B)$ defined respectively as $\pi^{\mathsf{T}}_0 = \langle \pi_0, 0 \rangle$ and $\pi^{\mathsf{T}}_1 = \langle \pi_1, 0 \rangle$, and the pairing of Kleisli maps $f=\langle f_0, f_1 \rangle$ and $g=\langle g_0, g_1 \rangle$ is defined as:
\[ \langle f, g \rangle^\mathsf{T} = \phi \circ \pair{f}{g} = \four{f_0}{g_0}{f_1}{g_1} \]
The terminal object is again $\top$ and where the unique map to the terminal object is $!^{\mathsf{T}}_A = 0$. The sum of Kleisli maps $f=\langle f_0, f_1 \rangle$ and $g=\langle g_0, g_1 \rangle$ is defined as:
\[f +^\mathsf{T} g = f + g = \langle f_0 + g_0, f_1 + g_1 \rangle\]
and the zero Kleisli maps is simply $0^\T = 0 = \langle 0, 0 \rangle$. Therefore we conclude that the Kleisli category of the tangent monad is a Cartesian left additive category. 

\begin{lemma} $\mathbb{X}_\mathsf{T}$ is a Cartesian left additive category. 
\end{lemma}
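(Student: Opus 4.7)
The plan is to verify the four ingredients of a Cartesian left additive category for $\mathbb{X}_\mathsf{T}$: a terminal object, binary products, a commutative monoid structure on each hom-set such that Kleisli pre-composition preserves it, and additivity of the product projections.

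For the terminal object and product structure, the key point is that $\mathsf{T}$ preserves finite products up to the canonical iso $\phi : \mathsf{T}(A) \times \mathsf{T}(B) \to \mathsf{T}(A \times B)$ (transparent from $\mathsf{T}X = X \times X$) and the iso $\mathsf{T}(\top) \cong \top$. Composing along $\phi$ therefore gives a natural bijection between Kleisli maps $C \to A \times B$ and pairs of Kleisli maps $(C \to A,\, C \to B)$, and one checks that this bijection identifies $\pi^{\mathsf{T}}_0$ and $\pi^{\mathsf{T}}_1$ with the two components of such a pair, using the monad law $\mu \circ \mathsf{T}(\eta) = 1$ together with \textbf{[C$\dd$.3]}. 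The universal property of $(A \times B, \pi^{\mathsf{T}}_0, \pi^{\mathsf{T}}_1)$ in $\mathbb{X}_\mathsf{T}$ then follows at once from that of the underlying product in $\mathbb{X}$. Similarly $\top$ remains terminal in $\mathbb{X}_\mathsf{T}$, with unique map $!^{\mathsf{T}}_A = 0$.

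The commutative monoid structure on $\mathbb{X}_\mathsf{T}(A, B) = \mathbb{X}(A, \mathsf{T}(B))$ is inherited pointwise from $\mathbb{X}$, so its axioms are automatic. The core step is to show that Kleisli pre-composition preserves sums and zeros. Unfolding $g \circ^\mathsf{T} f = \mu \circ \mathsf{T}(g) \circ f$, this reduces to two sub-claims: (i) $\mathsf{T}$ preserves sums and zeros on hom-sets, $\mathsf{T}(f+g) = \mathsf{T}(f) + \mathsf{T}(g)$ and $\mathsf{T}(0) = 0$, which is immediate from the definition $\mathsf{T}(f) = \langle f \circ \pi_0, \dd[f] \rangle$ together with \textbf{[C$\dd$.1]}; and (ii) $\mu$ is an additive map in $\mathbb{X}$. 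Combined with the left additivity of $\mathbb{X}$, these two facts supply the required distributivity.

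The main obstacle is (ii). Expanding $\mu_A = \langle \pi_0 \circ \pi_0,\, \pi_1 \circ \pi_0 + \pi_0 \circ \pi_1 + \varepsilon(\pi_1 \circ \pi_1) \rangle$, the additivity of $\mu_A$ reduces to additivity of each summand. The projection terms are trivially additive, and the sole delicate summand $\varepsilon(\pi_1 \circ \pi_1)$ rewrites as $\varepsilon(1) \circ \pi_1 \circ \pi_1$ via the identity $\varepsilon(f) = \varepsilon(1_B) \circ f$ recorded in Section \ref{CdCsec}. Since the same remark observes that $\varepsilon(1)$ is itself additive, so is this term, establishing (ii). Finally, rather than checking additivity of the two projections separately, I would observe the stronger fact that \emph{every} Kleisli map $k$ is additive in $\mathbb{X}_\mathsf{T}$: the computation $k \circ^\mathsf{T} (f + g) = \mu \circ \mathsf{T}(k) \circ (f + g) = \mu \circ \mathsf{T}(k) \circ f + \mu \circ \mathsf{T}(k) \circ g$ uses only left additivity in $\mathbb{X}$, with no hypothesis on $k$, and in particular applies to $\pi^{\mathsf{T}}_0$ and $\pi^{\mathsf{T}}_1$.
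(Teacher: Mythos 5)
Your overall decomposition is sound and, where the paper simply asserts this lemma after exhibiting the structure, your route through the two sub-claims --- $\mathsf{T}(f+g) = \mathsf{T}(f) + \mathsf{T}(g)$ and $\mathsf{T}(0)=0$ via \textbf{[C$\dd$.1]}, plus additivity of $\mu$ via $\varepsilon(\pi_{11}) = \varepsilon(1)\circ\pi_{11}$ with $\varepsilon(1)$ additive --- is correct and is essentially the argument the paper relies on implicitly (the paper later invokes ``$\mu$ is linear and therefore additive'' for the same purpose). The product and terminal-object discussion is also fine.

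However, your final step contains a genuine error. You claim that \emph{every} Kleisli map $k$ is additive, justified by the computation $\mu \circ \mathsf{T}(k) \circ (f+g) = \mu \circ \mathsf{T}(k) \circ f + \mu \circ \mathsf{T}(k) \circ g$ ``using only left additivity.'' Left additivity only gives $(a+b)\circ h = a\circ h + b\circ h$, i.e.\ distributivity when the \emph{sum} is pre-composed with a map; here the sum $f+g$ sits to the \emph{right} of $\mu\circ\mathsf{T}(k)$, so what you need is that the single map $\mu\circ\mathsf{T}(k)$ is additive, which fails for general $k$. Concretely, the first component of $\mathsf{T}(k)\circ(f+g)$ is $k\circ(\pi_0\circ f + \pi_0\circ g)$, which is not $k\circ\pi_0\circ f + k\circ\pi_0\circ g$ unless $k$ itself is additive; in $\overline{\mathsf{Ab}}_\mathsf{T}$ any non-additive set function gives a counterexample. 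The conclusion you actually need --- additivity of $\pi_0^\mathsf{T}$ and $\pi_1^\mathsf{T}$ --- is still true, but it must use their specific form: since $\pi_i^\mathsf{T} = \eta\circ\pi_i$ with $\eta$ and $\pi_i$ linear, $\mathsf{T}(\pi_i^\mathsf{T})$ is built from additive maps and hence $\mu\circ\mathsf{T}(\pi_i^\mathsf{T})$ is additive; alternatively, the explicit composition formula gives $\pi_i^\mathsf{T}\circ^\mathsf{T}\langle h_0,h_1\rangle = \langle \pi_i\circ h_0, \pi_i\circ h_1\rangle$, which is visibly additive in $h$.
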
 

The infinitesimal extension $\varepsilon^\T$ for the Kleisli category is the same as the infinitesimal extension of the base category, that is, for a Kleisli map $f= \langle f_0, f_1 \rangle$: 
\[ \varepsilon^\T(f) = \varepsilon(f) = \pair{\varepsilon(f_0)}{\varepsilon(f_1)}\]
and so it follows that: 
\[ f \oplus^\T g = f +^\T \varepsilon^\T(g) = f + \varepsilon(g) = f \oplus g \]

\begin{lemma} $\varepsilon^\T$ is an infinitesimal extension on $\mathbb{X}_\mathsf{T}$. 
\end{lemma}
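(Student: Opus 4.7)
The plan is to unpack each axiom of an infinitesimal extension in the Kleisli category into a pair of equations in the base category, using the fact that $\varepsilon^\T$, $+^\T$, and $0^\T$ are defined componentwise from their counterparts in $\mathbb{X}$. The first two axioms (preservation of sum and zero) are essentially free: since $\varepsilon^\T(f) = \varepsilon(f)$, $f +^\T g = f + g$, and $0^\T = 0$, the equations $\varepsilon^\T(f +^\T g) = \varepsilon^\T(f) +^\T \varepsilon^\T(g)$ and $\varepsilon^\T(0^\T) = 0^\T$ follow immediately from the fact that $\varepsilon$ is a monoid morphism on each hom-set of $\mathbb{X}$.

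The heart of the proof is checking the compatibility with Kleisli composition, $\varepsilon^\T(g \circ^\T f) = \varepsilon^\T(g) \circ^\T f$. I would write $f = \langle f_0, f_1\rangle$ and $g = \langle g_0, g_1\rangle$ and expand the left-hand side using the Kleisli composition formula $g \circ^\T f = \langle g_0 \circ f_0,\; \dd[g_0]\circ\langle f_0,f_1\rangle + g_1\circ(f_0 \oplus f_1)\rangle$. Applying $\varepsilon$ componentwise and then distributing $\varepsilon$ over sums and compositions (using the infinitesimal extension axioms in $\mathbb{X}$) yields $\langle \varepsilon(g_0)\circ f_0,\; \varepsilon(\dd[g_0])\circ\langle f_0,f_1\rangle + \varepsilon(g_1)\circ(f_0\oplus f_1)\rangle$. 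Expanding the right-hand side $\varepsilon^\T(g)\circ^\T f = \langle \varepsilon(g_0),\varepsilon(g_1)\rangle \circ^\T f$ gives the same expression once we replace $\dd[\varepsilon(g_0)]$ by $\varepsilon(\dd[g_0])$ via the third clause of \CdCax{1}. This last rewrite is the single place where a combinator axiom is invoked, so I would flag it as the only non-routine step.

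For the projection clause, I would compute $\pi^\T_0 \circ^\T \varepsilon^\T(1^\T_{A\times B})$ explicitly. Since $1^\T_{A\times B} = \eta_{A\times B} = \langle 1_{A\times B}, 0\rangle$, its infinitesimal extension is $\langle \varepsilon(1_{A\times B}), 0\rangle$, and the Kleisli composition with $\pi^\T_0 = \langle \pi_0, 0\rangle$ produces the pair $\langle \pi_0 \circ \varepsilon(1_{A\times B}),\; \dd[\pi_0]\circ\langle \varepsilon(1_{A\times B}),0\rangle + 0\rangle$. The second component vanishes because $\dd[\pi_0] = \pi_0 \circ \pi_1$ by \CdCax{3}, and the first component equals $\varepsilon(\pi_0)$ by the infinitesimal extension axiom $\varepsilon(\pi_0) = \pi_0 \circ \varepsilon(1_{A\times B})$ in $\mathbb{X}$. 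Thus $\pi^\T_0 \circ^\T \varepsilon^\T(1^\T_{A\times B}) = \langle \varepsilon(\pi_0), 0\rangle = \varepsilon^\T(\pi^\T_0)$, and the analogous calculation handles $\pi^\T_1$.

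I do not anticipate any real obstacles: every equation reduces, componentwise, to an instance of the infinitesimal extension axioms for $\varepsilon$ on $\mathbb{X}$ together with the single identity $\dd[\varepsilon(f)] = \varepsilon(\dd[f])$ from \CdCax{1}. The main point to get right is bookkeeping of the second component in Kleisli composition, particularly ensuring that the $\dd$ terms line up on both sides when $\varepsilon$ is pushed through.
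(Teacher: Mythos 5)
Your proof is correct, and it supplies exactly the verification the paper omits (the lemma is stated there without proof): everything reduces componentwise to the infinitesimal-extension axioms of $\mathbb{X}$, with the compatibility with Kleisli composition hinging on the single rewrite $\dd[\varepsilon(g_0)] = \varepsilon(\dd[g_0])$ from \CdCax{1}, which you correctly identify as the only non-routine step. The bookkeeping of the second component of $g \circ^\T f$ and the projection clause both check out.
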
 

To define the difference combinator for the Kleisli category, first note that difference combinators by definition do not change the codomain. That is, if $f : A \to \T(B)$ is a Kleisli arrow, then the type of its derivative \emph{qua} Kleisli arrow should be $A \times A \to \T(B) \times \T(B)$, which coincides with the type of its derivative in $\mathbb{X}$. Therefore, the difference combinator $\dd^\T$ for the Kleisli category can be defined to be the difference combinator of the base category, that is, for a Kleisli map $f= \langle f_0, f_1 \rangle$:
\[\dd^\T[f] = \dd[f] = \langle \dd[f_0], \dd[f_1] \rangle\] 

\begin{proposition} For a Cartesian difference category $\mathbb{X}$, the Kleisli category $\mathbb{X}_\mathsf{T}$ is a Cartesian difference category with infinitesimal extension $\varepsilon^\T$ and difference combinator $\dd^\T$.
\end{proposition}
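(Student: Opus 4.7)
The plan is to verify axioms \CdCax{0}--\CdCax{7} for $\dd^\T$ by unpacking Kleisli operations in terms of base-category operations and reducing to identities provable from the corresponding axioms for $\dd$ in $\mathbb{X}$, together with Lemma \ref{lem:d-epsilon}. The starting point is a catalogue of the Kleisli operations as arrows in $\mathbb{X}$: pairing is $\langle f, g\rangle^\T = \four{f_0}{g_0}{f_1}{g_1}$, projections are $\pi_i^\T = \langle \pi_i, 0\rangle$, addition is $+^\T = +$, infinitesimal extension is $\varepsilon^\T(f) = \varepsilon(f)$, and Kleisli composition is $g \circ^\T f = \pair{g_0 \circ f_0}{\dd[g_0]\circ \pair{f_0}{f_1} + g_1 \circ (f_0 \oplus f_1)}$. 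Because $\dd^\T[f] = \dd[f] = \pair{\dd[f_0]}{\dd[f_1]}$ by \CdCax{4} in $\mathbb{X}$, every Kleisli-level axiom decomposes into two componentwise identities, which is what makes the verification tractable.

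I would dispatch the routine axioms first. \CdCax{1}, \CdCax{3}, and \CdCax{4} follow directly from their base-category analogues, using $+^\T = +$, $\varepsilon^\T = \varepsilon$, and the explicit descriptions of projections and pairings in $\mathbb{X}_\T$ just listed. The second-order axioms \CdCax{7} and \CdCax{6} (or, preferably, the equivalent \CdCax{6.a}) also transfer componentwise, as they concern only the behaviour of $\dd$ in $\mathbb{X}$ on each of $f_0$ and $f_1$ and do not interact with Kleisli composition at all.

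The substantive work lies in \CdCax{0}, \CdCax{2}, and the chain rule \CdCax{5}, which all involve Kleisli composition nontrivially. For \CdCax{0} and \CdCax{2} one expands $f \circ^\T (x \oplus^\T y)$, applies the base \CdCax{0} to each of $f_0$ and $f_1$, and then uses Lemma \ref{lem:d-epsilon} to rearrange the resulting $\varepsilon$-terms so that the two sides align. The main obstacle I expect is the chain rule: the left-hand side $\dd^\T[g \circ^\T f]$ requires differentiating the second component of $g \circ^\T f$, which is already the sum $\dd[g_0] \circ \pair{f_0}{f_1} + g_1 \circ (f_0 \oplus f_1)$, while the right-hand side $\dd^\T[g] \circ^\T \langle f \circ^\T \pi_0^\T, \dd^\T[f]\rangle^\T$ requires a further Kleisli composition that reintroduces $\oplus$ and $\dd^2$. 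Matching these will require repeated use of \CdCax{5} and \CdCax{6} in $\mathbb{X}$, together with Lemma \ref{lem:d-epsilon}.iii. to absorb the $\varepsilon^2$-weighted second-order terms. Once the chain rule is verified, the remaining axioms close up without further surprises.
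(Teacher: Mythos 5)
There is a genuine gap in your assessment of which axioms are routine. You claim that because $\dd^\T[f] = \pair{\dd[f_0]}{\dd[f_1]}$, every Kleisli-level axiom ``decomposes into two componentwise identities,'' and in particular that \CdCax{6} and \CdCax{7} ``do not interact with Kleisli composition at all.'' This is false: the precompositions appearing in the statements of \CdCax{2}, \CdCax{6} and \CdCax{7} (and the evaluation $\dd^\T[f]\circ^\T\pair{x}{y}^\T$ in \CdCax{0}) are Kleisli compositions, and Kleisli composition does not act componentwise. Concretely, writing $x = \pair{x_0}{x_1}$ and so on, your own composition formula gives
\[
\dd^\T[f] \circ^\T \pair{x}{y}^\T = \left\langle \dd[f_0]\circ\pair{x_0}{y_0},\ \dd\left[\dd[f_0]\right]\circ\four{x_0}{y_0}{x_1}{y_1} + \dd[f_1]\circ\pair{x_0+\varepsilon(x_1)}{y_0+\varepsilon(y_1)}\right\rangle ,
\]
so the second component already mixes $\dd^2[f_0]$ with $\dd[f_1]$; for \CdCax{6} and \CdCax{7} the analogous expansion of $\dd^\T\left[\dd^\T[f]\right]\circ^\T\left\langle\pair{x}{y}^\T,\pair{z}{w}^\T\right\rangle^\T$ produces third derivatives of $f_0$. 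These are among the hardest verifications in a componentwise approach, not the easiest, and your plan gives no indication of how they would close.

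The paper avoids this entirely via a structural identity you are missing: $\T(\dd[f])\circ\phi = \dd[\T(f)]$, where $\phi$ is the canonical isomorphism $\T(A)\times\T(A)\cong\T(A\times A)$. Combined with the linearity of $\mu$, $\eta$ and $\phi$, this yields $\dd^\T[f]\circ^\T\pair{x}{y}^\T = \mu\circ\dd[\T(f)]\circ\pair{x}{y}$ and $\dd^\T\left[\dd^\T[f]\right]\circ^\T\left\langle\pair{x}{y}^\T,\pair{z}{w}^\T\right\rangle^\T = \mu\circ\dd^2[\T(f)]\circ\four{x}{y}{z}{w}$, so each Kleisli axiom reduces to the corresponding base axiom applied to the single map $\T(f)$, followed by postcomposition with the additive map $\mu$. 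In particular the chain rule, which you single out as the main obstacle requiring Lemma \ref{lem:d-epsilon}.iii, follows in a few lines from \CdCax{5} applied to $\T(g)\circ f$ together with linearity of $\mu$ --- no second-order cancellations are needed anywhere in the proof. I recommend reorganizing your argument around this identity rather than around componentwise expansion.
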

\begin{proof}   
We first note that, for any map $f$ in $\mathbb{X}$, the following equality holds: 
  \[ 
    \T(\dd[f]) \circ \phi = 
    \dd[{\T(f)}] 
  \] 
This will help simplify many of the calculations to follow, since
  $\T(\dd[f])$ appears everywhere due to the definition of Kleisli
  composition. Indeed, note that we can first compute that: 
  \begin{align*}
\dd^T[f] \circ^\T \pair{x}{y}^\T &=~ \mu \circ \T(\dd^T[f]) \circ \pair{x}{y}^\T \\
&=~  \mu \circ \T(\dd[f]) \circ \phi \circ \pair{x}{y} \\
&=~ \mu \circ  \dd[{\T(f)}] \circ \pair{x}{y}
\end{align*}
So we have that $\dd^T[f] \circ^\T \pair{x}{y}^\T = \mu \circ  \dd[{\T(f)}] \circ \pair{x}{y}$. 
  
We now prove the seven Cartesian difference category axioms. \\\\
\noindent {\bf [C$\dd$.0]} $f \circ^\T (x +^\T \varepsilon^\T(y)) = f \circ^\T x +^\T \varepsilon^\T\left( \dd^\T[f] \circ^\T \langle x, y \rangle^\T \right)$ \\\\
    First note that $\mu$ is linear and therefore additive. Then we compute that: 
     \begin{align*}
f \circ^\T x +^\T \varepsilon^\T\left( \dd^\T[f] \circ^\T \langle x, y \rangle^\T \right) &=~ f \circ^\T x + \varepsilon\left( \dd^\T[f] \circ^\T \langle x, y \rangle^\T \right) \\
&=~ \mu \circ \T(f) \circ x + \varepsilon\left( \mu \circ  \dd[{\T(f)}] \circ \pair{x}{y} \right) \\
&=~ \mu \circ \T(f) \circ x + \mu \circ \varepsilon\left( \dd[{\T(f)}] \circ \pair{x}{y} \right) \tag{$\mu$ linear and Lem.\ref{Lemma:linear}.1} \\
&=~ \mu \circ \left(\T(f) \circ x + \varepsilon\left( \dd[{\T(f)}] \circ \pair{x}{y} \right) \right) \tag{$\mu$ is additive by Lem.\ref{Lemma:linear}.2} \\
&=~\mu \circ \T(f) \circ (x + \varepsilon(y))  \tag{by {\bf [C$\dd$.0]}} \\
&=~ f \circ^\T (x + \varepsilon(y)) \\
&=~f \circ^\T (x +^\T \varepsilon^\T(y))
\end{align*}
\noindent {\bf [C$\dd$.1]}  $\dd^\T[f +^\T g] = \dd^\T[f] +^\T \dd^\T[g]$, $[\dd^\T[0^\T] =  \dd[0]$, and $\dd^\T[\varepsilon^\T(f)] = \varepsilon^\T\left( \dd^\T[f] \right)$ \\\\
    Since both the sum, zero maps, infinitesimal extension, and differential combinator in the Kleisli category are the same as in the base category, by {\bf [C$\dd$.1]} it easily follows that:
    \begin{align*}
\dd^\T[f +^\T g] = \dd[f+g] = \dd[f] + \dd[g] = \dd^\T[f] +^\T \dd^\T[g] 
\end{align*}
\[\dd^\T[0^\T] = \dd[0] = 0 = 0^\T\]
\[\dd^\T[\varepsilon^\T(f)] = \dd[\varepsilon(f)] =  \varepsilon\left( \dd[f] \right) =  \varepsilon^\T\left( \dd^\T[f] \right) \]
\noindent {\bf [C$\dd$.2]}  $\dd^\T[f] \circ^\T \langle x, y +^\T z \rangle^\T =  \dd^\T[f] \circ^\T \langle x, y \rangle^\T + \dd^\T[f] \circ^\T \langle x + \varepsilon^\T(y), z \rangle^\T$ and $\dd^\T[f] \circ^\T \langle x, 0^\T \rangle^\T = 0^\T$ 
    \begin{align*}
\dd^\T[f] \circ^\T \pair{x}{y +^\T z}^\T &=~\mu \circ \dd\left[ \T(f) \right] \circ \pair{x}{y +^\T z} \\ 
&=~\mu \circ \dd\left[ \T(f) \right] \circ \pair{x}{y + z} \\ 
&=~\mu \circ \dd\left[ \T(f) \right] \circ \pair{x}{y} + \mu \circ \dd\left[ \T(f) \right] \circ \pair{x + \varepsilon(y)}{z} \tag{by {\bf [C$\dd$.2]}} \\
&=~ \mu \circ \dd\left[ \T(f) \right] \circ \pair{x}{y} + \mu \circ \dd\left[ \T(f) \right] \circ \pair{x +^\T \varepsilon^\T(y)}{z} \\
&=~ \dd^\T[f] \circ^\T \langle x, y \rangle^\T + \dd^\T[f] \circ^\T \langle x + \varepsilon^\T(y), z \rangle^\T \\\\
\dd^\T[f] \circ^\T \pair{x}{0^\T}^\T &=~ \dd^\T[f] \circ^\T \pair{x}{0}^\T \\
&=~\mu \circ \dd\left[ \T(f) \right] \circ \pair{x}{0} \\
&=~ 0 \tag{by {\bf [C$\dd$.2]}} \\
&=~ 0^\T
\end{align*}   
\noindent {\bf [C$\dd$.3]}   $\dd^\T[\eta] = \pi_1^\T$ and $\dd^\T[\pi^\T_i] =\pi_i^\T\circ^\T \pi_1^\T$ \\\\
First note that $\pi_i^\T = \eta \circ \pi_i$ and $\pi_i^\T\circ^\T \pi_1^\T = \eta \circ \pi_i \circ \pi_1$. Then since $\eta$ is linear, we have that: 
\begin{align*}
\dd^\T[\eta] &=~ \dd[\eta] \\
&=~ \eta \circ \pi_1 \tag{$\eta$ is linear} \\
&=~ \pi_1^\T \\\\
\dd^\T[\pi^\T_i] &=~ \dd[\pi^\T_i] \\
&=~ \dd[\eta \circ \pi_i] \\
&=~\eta \circ \dd[\pi_i] \tag{$\eta$ linear and Lem.\ref{Lemma:linear}.5} \\
&=~\eta \circ \pi_i \circ \pi_1 \tag{by {\bf [C$\dd$.3]}} 
\end{align*}
\noindent {\bf [C$\dd$.4]}  $\dd^\T[\langle f, g \rangle^\T] = \langle \dd^\T[f], \dd^\T[g] \rangle^\T$ \\\\
First note that $\phi$ is linear. Then we have that: 
 \begin{align*}
\dd^\T[\langle f, g \rangle^\T] &=~ \dd[\langle f, g \rangle^\T] \\
&=~ \dd\left[\phi \circ \langle f, g \rangle \right] \\
&=~ \phi \circ \dd\left[ \langle f, g \rangle \right] \tag{$\phi$ linear and Lem.\ref{Lemma:linear}.5} \\
&=~ \phi \circ \langle \dd[f], \dd[g] \rangle  \tag{by {\bf [C$\dd$.4]}} \\
&=~  \langle \dd[f], \dd[g] \rangle^\T \\
&=~ \langle \dd^\T[f], \dd^\T[g] \rangle^\T
\end{align*}

\noindent {\bf [C$\dd$.5]}  $\dd^\T[g \circ^\T f] = \dd^\T[g] \circ^\T \langle f \circ^\T \pi_0^\T, \dd^\T[f] \rangle^\T$ \\\\
First note that since $\pi_0^\T = \eta \circ \pi_0$, it easily follows that $f \circ^\T \pi_0^\T = f \circ \pi_0$ (using the monad identities and the naturality of $\eta$). Therefore, we compute that: 

\begin{align*}
 \dd^\T[g] \circ^\T \langle f \circ^\T \pi_0^\T, \dd^\T[f] \rangle^\T &=~ \mu \circ \dd\left[ \T(g) \right] \circ \langle f \circ^\T \pi_0^\T, \dd^\T[f] \rangle \\
 &=~\mu \circ \dd\left[ \T(g) \right] \circ \langle f \circ \pi_0, \dd[f] \rangle \\
    &=~ \mu \circ \dd[\T(g) \circ f]  \tag{by {\bf [C$\dd$.5]}} \\
    &=~ \dd\left[\mu \circ T(g) \circ f \right]  \tag{$\mu$ linear and Lem.\ref{Lemma:linear}.5} \\
    &=~\dd[g \circ^\T f] \\
    &=~ \dd^\T[g \circ^\T f] 
\end{align*}
For the remaining two axioms, we will instead prove {\bf [C$\dd$.6.a]} and {\bf [C$\dd$.7.a]}. Before we do so, we first compute the following: 
\begin{align*}
\dd^\T\left[\dd^\T[f] \right] \circ^\T \left \langle \langle  x, y \rangle^\T, \langle z,w \rangle^\T \right \rangle^\T &=~ \mu \circ \dd\left[ \T(\dd^\T[f]) \right] \circ  \left \langle \langle  x, y \rangle^\T, \langle z,w \rangle^\T \right \rangle \\
&=~ \mu \circ \dd\left[ \T(\dd[f]) \right] \circ  \left \langle \phi \circ \langle  x, y \rangle, \phi \circ \langle z,w \rangle \right \rangle \\
&=~ \mu \circ \dd\left[ \T(\dd[f]) \right] \circ  \left \langle \phi \circ \langle  x, y \rangle, \phi \circ \langle z,w \rangle \right \rangle \\
&=~ \mu \circ \dd\left[ \T(\dd[f]) \right] \circ (\phi \times \phi) \circ \four{x}{y}{z}{w} \\
&=~ \mu \circ \dd\left[ \T(\dd[f]) \circ \phi \right]\circ \four{x}{y}{z}{w} \tag{$\phi$ linear and Lem.\ref{Lemma:linear}.5} \\
&=~\mu \circ \dd^2\left[ \T(f) \right] \circ \four{x}{y}{z}{w} 
\end{align*}
Therefore we have that $\dd^\T\left[\dd^\T[f] \right] \circ^\T \left \langle \langle  x, y \rangle^\T, \langle z,w \rangle^\T \right \rangle^\T = \mu \circ \dd^2\left[ \T(f) \right] \circ \four{x}{y}{z}{w}$. \\\\
\noindent {\bf [C$\dd$.6.a]}  $\dd^\T\left[\dd^\T[f] \right] \circ^\T \left \langle \langle  x, 0^\T \rangle^\T, \langle 0^\T, y \rangle^\T \right \rangle^\T = \dd^\T[f] \circ^\T  \langle x, y \rangle^\T$
\begin{align*}
\dd^\T\left[\dd^\T[f] \right] \circ^\T \left \langle \langle  x, 0^\T \rangle^\T, \langle 0^\T, y \rangle^\T \right \rangle^\T &=~ \mu \circ \dd^2 \left[\T(f) \right] \circ \left \langle  \langle  x, 0^\T \rangle, \langle 0^\T,y \rangle \right \rangle \\
&=~  \mu \circ \dd^2 \left[\T(f) \right] \circ \left \langle  \langle  x, 0 \rangle, \langle 0,y \rangle \right \rangle \\
&=~\mu \circ \dd[\T(f)] \circ \pair{x}{y} \tag{by {\bf [C$\dd$.6]}} \\
&=~  \dd^\T[f] \circ^\T  \langle x, y \rangle^\T
\end{align*}
\noindent {\bf [C$\dd$.7.a]}  $\dd^\T\left[\dd^\T[f] \right] \circ^\T \left \langle \langle x, y \rangle^\T, \langle z, w \rangle^\T \right \rangle^\T= \dd^\T\left[\dd^\T[f] \right] \circ \left \langle \langle x, z \rangle^\T, \langle y, w \rangle^\T \right \rangle^\T$
\begin{align*}
\dd^\T\left[\dd^\T[f] \right] \circ^\T \left \langle \langle x, y \rangle^\T, \langle z, w \rangle^\T \right \rangle^\T &=~\mu \circ \dd^2\left[ \T(f) \right] \circ \four{x}{y}{z}{w} \\
&=~  \mu \circ \dd \left[ \dd[\T(f)] \right] \circ \left \langle  \langle  x, z \rangle, \langle y,w \rangle \right \rangle \tag{by {\bf [C$\dd$.7]}} \\
&=~  \dd^\T\left[\dd^\T[f] \right] \circ \left \langle \langle x, z \rangle^\T, \langle y, w \rangle^\T \right \rangle^\T 
\end{align*}
  So we conclude that the Kleisli category is a Cartesian difference category. 
  \hfill $\blacksquare$
\end{proof}

We also point that in the case of a Cartesian differential category, since $\varepsilon =0$, it follows that $\varepsilon^T = 0$ also. Therefore we have that the Kleisli category of a Cartesian differential category is again a Cartesian differential category. To the knowledge of the authors, this is a novel observation. 

\begin{corollary}  For a Cartesian differential category $\mathbb{X}$, the Kleisli category $\mathbb{X}_\mathsf{T}$ is a Cartesian differential category with differential combinator $\D^\T = \D$.
\end{corollary}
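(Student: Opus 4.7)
The plan is to derive this corollary directly from the preceding proposition (which established that $\mathbb{X}_\T$ is a Cartesian difference category whenever $\mathbb{X}$ is) combined with Proposition \ref{CdtoCD} (which characterizes Cartesian differential categories as the full subcategory of a Cartesian difference category on objects where $\varepsilon(1_A) = 0$). No new induction or axiom verification should be needed.

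First, I would note that since $\mathbb{X}$ is a Cartesian differential category, it is itself a Cartesian difference category with infinitesimal extension $\varepsilon = 0$ and with difference combinator equal to its differential combinator $\D$. Applying the preceding proposition yields that $\mathbb{X}_\T$ is a Cartesian difference category with infinitesimal extension $\varepsilon^\T$ and difference combinator $\dd^\T$. The key step is then to observe that, by the construction in the previous subsection, $\varepsilon^\T(f) = \varepsilon(f)$ for every Kleisli map $f$. Since $\varepsilon = 0$ in $\mathbb{X}$, it follows that $\varepsilon^\T = 0$ in $\mathbb{X}_\T$.

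Once $\varepsilon^\T = 0$ on every Kleisli hom-set, every object $A$ of $\mathbb{X}_\T$ satisfies $\varepsilon^\T(1^\T_A) = 0$, so the full subcategory $(\mathbb{X}_\T)_0$ of Proposition \ref{CdtoCD} coincides with $\mathbb{X}_\T$ itself. Proposition \ref{CdtoCD} then tells us that $\mathbb{X}_\T$ is a Cartesian differential category whose differential combinator is precisely $\dd^\T$; unfolding the definition $\dd^\T[f] = \dd[f] = \D[f]$, we obtain $\D^\T = \D$ as claimed.

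The main (minor) thing to double-check is the consistency of the construction when $\varepsilon = 0$: the monad multiplication $\mu_A = \langle \pi_0 \circ \pi_0, \pi_1 \circ \pi_0 + \pi_0 \circ \pi_1 + \varepsilon(\pi_1 \circ \pi_1)\rangle$ collapses to the standard $\langle \pi_0 \circ \pi_0, \pi_1 \circ \pi_0 + \pi_0 \circ \pi_1\rangle$, and the Kleisli composition formula reduces to $g \circ^\T f = \langle g_0 \circ f_0, \D[g_0] \circ \langle f_0, f_1\rangle + g_1 \circ f_0\rangle$, which matches the standard tangent-Kleisli composition of a Cartesian differential category. There is no real obstacle here; the corollary is essentially an immediate specialization, and the only substantive point to highlight (as the authors do) is that this yields a Cartesian differential structure on $\mathbb{X}_\T$ that does not appear to have been recorded elsewhere.
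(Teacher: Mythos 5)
Your proposal is correct and follows essentially the same route as the paper: the authors also observe that $\varepsilon = 0$ forces $\varepsilon^\T = 0$ on the Kleisli category, whence the difference-category structure established in the preceding proposition collapses to a differential one. Your invocation of Proposition \ref{CdtoCD} merely makes explicit the step the paper leaves implicit, so there is nothing to add.
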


We conclude this section by taking a look at the linear maps and the $\varepsilon^\T$-linear maps in the Kleisli category. A Kleisli map $f=\langle f_0, f_1 \rangle$ is linear in the Kleisli category if $\dd^\T[f] = f \circ^\T \pi^\T_1$, which amounts to requiring that:
\[ \langle \dd[f_0], \dd[f_1] \rangle = \langle f_0 \circ \pi_1, f_1 \circ \pi_1 \rangle \]
Therefore a Kleisli map is linear in the Kleisli category if and only if it is the pairing of maps which are linear in the base category. Similarly, a Kleisli map is $\varepsilon^\T$-linear if and only if is the pairing of $\varepsilon$-linear maps. 

\section{Conclusions and Future Work}

We have presented Cartesian difference categories, which generalize Cartesian differential categories to account for more discrete definitions of derivative, while providing additional structure that is absent in change action models. We have also exhibited important examples and shown that Cartesian difference categories arise quite naturally from considering tangent bundles in any Cartesian differential category. Our claim is that Cartesian difference categories can facilitate the exploration of differentiation in discrete spaces, by generalizing techniques and ideas from the study of their differential counterparts.

For example, Cartesian differential categories can be extended to allow objects whose tangent space is not necessarily isomorphic to the object itself \cite{cruttwell2017cartesian}. The same generalization could be applied to Cartesian difference categories -- with some caveats: for example, the equation defining a linear map (Definition \ref{def:linearity}) becomes ill-typed, but the notion of $\varepsilon$-linear map remains meaningful.

Another relevant path to consider is developing the analogue of the ``tensor'' story for Cartesian difference categories. Indeed, an important source of examples of Cartesian differential categories are the coKleisli categories of a tensor differential category \cite{blute2006differential,blute2009cartesian}. It is likely that a similar result holds for a hypothetical ``tensor difference category'', but it is not clear how these should be defined: \CdCax{2} implies that derivatives in the difference sense are non-linear and therefore their interplay with the tensor structure will be much different.

A further generalization of Cartesian differential categories, categories with tangent structure \cite{cockett2014differential} are defined directly in terms of a tangent bundle functor rather than requiring that every tangent bundle be trivial (that is, in a tangent category it may not be the case that $\T A = A \times A$). Some preliminary research on change actions has already shown that, when generalized in this way, change actions are in fact precisely internal categories, but the consequences of this for change action models (and, \emph{a fortiori}, Cartesian difference categories) are not understood.

More recently, some work has emerged about differential equations using the language of tangent categories \cite{cockett2017connections}. We believe similar techniques can be applied in a straightforward way to Cartesian difference categories, where they might be of use to give an abstract formalization of discrete dynamical systems and difference equations.

Finally, an important open question is whether Cartesian difference categories (or a similar notion) admit an internal language. It is well-known that the differential $\lambda$-calculus can be interpreted in Cartesian closed differential categories \cite{manzonetto2012categorical}. Given their similarities, we believe there will be a very similar ``difference $\lambda$-calculus'' which could potentially have applications to automatic differentiation (change structures, a notion similar to change actions, have already been proposed as models of forward-mode automatic differentiation \cite{kelly2016evolving}, although work on the area seems to have stagnated).

%
%
%
 \bibliographystyle{splncs04}
 \bibliography{references}
\end{document}